\newtheorem{theorem}{Theorem}[section]
\newtheorem{lemma}[theorem]{Lemma}
\newtheorem{corollary}[theorem]{Corollary}
\newtheorem{conjecture}[theorem]{Conjecture}
\newtheorem{problem}[theorem]{Problem}
\newenvironment{definition}[1][Definition]{\begin{trivlist}
\item[\hskip \labelsep {\bfseries #1}]}{\end{trivlist}}
\newcommand{\aesat}{$\forall \exists \mathrm{SAT}$}
\newcommand{\rmSigma}{\mathrm{\Sigma}}
\newcommand{\rmPi}{\mathrm{\Pi}}
\newcommand{\pibv}[1]{PIBV$_{#1}$}
\newcommand{\uibv}[1]{UIBV$_{#1}$}
\newcommand{\tibv}[1]{TIBV$_{#1}$}
\newcommand{\true}{\textsc{T}}
\author{Robert Schweller\thanks{University of Texas Rio Grande Valley, \protect{\texttt{\{robert.schweller,andrew.winslow,timothy.wylie\}@utrgv.edu}}. This research was supported in part by National Science Foundation Grants CCF-1117672 and CCF-1555626.} \and Andrew Winslow\footnotemark[1] \and Tim Wylie\footnotemark[1]}
\title{Verification in Staged Tile Self-Assembly}
\date{}
\begin{document}

\maketitle

\begin{abstract}
We prove the unique assembly and unique shape verification problems, benchmark measures of self-assembly model power, are $\coNP^{\NP}$-hard and contained in \PSPACE{} (and in $\rmPi^\P_{2s}$ for staged systems with $s$ stages).
En route, we prove that unique shape verification problem in the 2HAM is $\coNP^{\NP}$-complete.
%\keywords{DNA computing, biocomputing, 2HAM, hierarchical}
\end{abstract}

\section{Introduction}
\label{sec:intro}

Here we consider the complexity of two standard problems in tile self-assembly: deciding whether a system uniquely assembles a given assembly or shape.
These so-called \emph{unique assembly} and \emph{unique shape verification} problems are benchmark problems in tile assembly, and have been studied in a variety of models, including the aTAM~\cite{ACGHKMR02,Bryans-2013a}, the $q$-tile model~\cite{AGKS05g}, and the 2HAM~\cite{CDDEPSSW13}.

The unique assembly and unique shape verification problems ask whether a system behaves as expected: does a given system yield a unique given assembly or assemblies of a given unique shape?
The distinct rules by which assemblies form in various tile assembly models yield the potential for such problems to have varying complexity.
For instance, assuming $\P \neq \NP$, the unique \emph{assembly} verification problem is known to be a strictly easier problem in the aTAM than in the 2HAM.

However, several open questions remain.
For instance, such a separation between the aTAM and 2HAM for the unique \emph{shape} verification problem had not been known.
Here we prove such a separation (see Table~\ref{tab:compare}).

\bgroup
\setlength{\tabcolsep}{0.5em}
\def\arraystretch{1.2}

\begin{table}
\centering
\begin{tabular}{|c|c|c|}
\hline
Model & Unique Assembly & Unique Shape \\
\hline
aTAM & \P~\cite{ACGHKMR02} & \coNP{}-complete~\cite{AGKS05g} \\
\hline
2HAM & \coNP{}-complete~\cite{2HOT2publish} & $\coNP^{\NP}$-complete (Sec.~\ref{sec:2HAM}) \\
\hline
Staged & \multicolumn{2}{|c|}{$\coNP^{\NP}$-hard~(Sec.~\ref{sec:level2}), in $\PSPACE{}$~(Sec.~\ref{sec:pspace})} \\
\hline
\end{tabular}
\vspace{1em}
\caption{\label{tab:compare} Known and new results on the unique assembly and unique shape verification problems.}
\end{table}
\egroup

Additionally, a popular generalization of the 2HAM called the \emph{staged tile assembly model}~\cite{DDFIRSS07} has been shown to be capable of extremely efficient assembly across a range of parameters~\cite{CMS2016OSS,DDFIRSS07,demaine2011ODS,DFS2015NGA,Winslow-2015a}.
Does this power come from the increased complexity of verifying that systems assemble intended assemblies and shapes?

We achieve progress on these questions, proving a separation between the 2HAM and staged model for the unique assembly verification problem ($\coNP$-complete versus $\coNP^{\NP}$-hard) utilizing a promising technique that may lead to proving a stronger separation for the unique shape verification problem ($\coNP^{\NP}$-complete versus a conjectured \PSPACE-complete).

The $\coNP^{\NP}$-hardness results are also interesting as the first, to our knowledge, verification problems in irreversible tile assembly that are decidable but not contained in \NP{} or \coNP{}.

%With the exception of the $\NP^{\NP}$-completeness result for tile minimization in the aTAM by Bryans et al.~\cite{Bryans-2011a}, and the numerous undecidability results for problems involving unbounded or reversible tile assembly (e.g.,~\cite{CDDEPSSW13,MattScottDave}),

\section{The Staged Assembly Model}
\label{sec:model}

\textbf{Tiles.}
A \emph{tile} is a non-rotatable unit square with each edge labeled with a \emph{glue} from a set $\Sigma$.
Each pair of glues $g_1, g_2 \in \Sigma$ has a non-negative integer \emph{strength}, denoted ${\rm str}(g_1, g_2)$.
Every set $\Sigma$ contains a special \emph{null glue} whose strength with every other glue is 0.
If the glue strengths do not obey ${\rm str}(g_1,g_2) = 0$ for all $g_1 \neq g_2$, then the glues are \emph{flexible}.
Unless otherwise stated, we assume that glues are not flexible.

\textbf{Configurations, assemblies, and shapes.}
A \emph{configuration} is a partial function $A : \mathbb{Z}^2 \rightarrow T$ for some set of tiles $T$, i.e., an arrangement of tiles on a square grid.
For a configuration $A$ and vector $\vec{u} = \langle u_x, u_y \rangle \in \mathbb{Z}^2$, $A + \vec{u}$ denotes the configuration $f \circ A$, where $f(x, y) = (x + u_x, y + u_y)$.
For two configurations $A$ and $B$, $B$ is a \emph{translation} of $A$, written $B \simeq A$, provided that $B = A + \vec{u}$ for some vector $\vec{u}$.
For a configuration $A$, the \emph{assembly} of $A$ is the set $\tilde{A} = \{ B : B \simeq A \}$.
An assembly $\tilde{A}$ is a \emph{subassembly} of an assembly $\tilde{B}$, denoted $\tilde{A} \sqsubseteq \tilde{B}$, provided that there exists an $A\in \tilde{A}$ and $B\in \tilde{B}$ such that $A \subseteq B$.
The \emph{shape} of an assembly $\tilde{A}$ is $\{ {\rm dom}(A) : A \in \tilde{A}\}$ where dom() is the domain of a configuration.
A shape $S'$ is a \emph{scaled} version of shape $S$ provided that for some $k \in \mathbb{N}$ and $D \in S$, $\bigcup_{(x, y) \in D} \bigcup_{(i, j) \in \{0, 1, \dots, k-1\}^2} (kx + i, ky + j) \in S'$.

\textbf{Bond graphs and stability.}
For a configuration~$A$, define the \emph{bond graph}~$G_A$ to be the weighted grid graph in which each element of~${\rm dom}(A)$ is a vertex, and the weight of the edge between a pair of tiles is equal to the strength of the coincident glue pair.
A configuration is \emph{$\tau$-stable} for $\tau \in \mathbb{N}$ if every edge cut of $G_A$ has strength at least $\tau$, and is \emph{$\tau$-unstable} otherwise.
Similarly, an assembly is \emph{$\tau$-stable} provided the configurations it contains are $\tau$-stable.
Assemblies $\tilde{A}$ and $\tilde{B}$ are \emph{$\tau$-combinable} into an assembly $\tilde{C}$ provided there exist $A \in \tilde{A}$, $B \in \tilde{B}$, and $C \in \tilde{C}$ such that $A \bigcup B = C$, $\rm dom(A) \bigcap \rm dom(B) = \emptyset$, and $\tilde{C}$ is $\tau$-stable.

\textbf{Two-handed assembly and bins.}
We define the assembly process via bins.
A bin is an ordered tuple $(S,\tau)$ where $S$ is a set of \emph{initial} assemblies and $\tau \in \mathbb{N}$ is the \emph{temperature}.
In this work, $\tau$ is always equal to $2$ for upper bounds, and at most some constant for lower bounds.
For a bin $(S, \tau)$, the set of \emph{produced} assemblies $P'_{(S,\tau)}$ is defined recursively as follows:
\vspace*{-.2cm}
\begin{enumerate}
\item $S \subseteq P'_{(S,\tau)}$.
\item If $A,B \in P'_{(S,\tau)}$ are $\tau$-combinable into $C$, then $C \in P'_{(S,\tau)}$.
\end{enumerate}
\vspace*{-.2cm}
A produced assembly is \emph{terminal} provided it is not $\tau$-combinable with any other producible assembly, and the set of all terminal assemblies of a bin $(S,\tau)$ is denoted $P_{(S,\tau)}$.
That is, $P'_{(S,\tau)}$ represents the set of all possible assemblies that can assemble from the initial set $S$, whereas $P_{(S,\tau)}$ represents only the set of assemblies that cannot grow any further.

%If all assemblies in $P'_{(S, \tau)}$ have finite size, then the assemblies in $P_{(S,\tau)}$ are \emph{uniquely} produced by bin $(S,\tau)$.
The assemblies in $P_{(S,\tau)}$ are \emph{uniquely produced} iff for each $x \in P'_{(S, \tau)}$ there exists a corresponding $y \in P_{(S,\tau)}$ such that $x \sqsubseteq y$.
Unique production implies that every producible assembly can be repeatedly combined with others to form an assembly in $P_{(S,\tau)}$.

\textbf{Staged assembly systems.}
An \emph{$r$-stage $b$-bin mix graph} $M$ is an acyclic $r$-partite digraph consisting of $rb$ vertices $m_{i,j}$ for $1 \leq i \leq r$ and $1\leq j \leq b$, and edges of the form $(m_{i,j}, m_{i+1,j'})$ for some $i, j, j'$.
A \emph{staged assembly system} is a 3-tuple $\langle M_{r,b}, \{T_1, T_2, \dots, T_b\}, \tau \rangle$ where $M_{r,b}$ is an $r$-stage $b$-bin mix graph, $T_i$ is a set of tile types, and $\tau \in \mathbb{N}$ is the temperature.
Given a staged assembly system, for each $1\leq i \leq r$, $1\leq j \leq b$, a corresponding bin $(R_{i,j}, \tau)$ is defined as follows:
\vspace*{-.2cm}
\begin{enumerate}
\item $R_{1,j}= T_j$ (this is a bin in the first stage);
\item For $i\geq 2$,
$\displaystyle R_{i,j}= \Big(\bigcup_{k:\ (m_{i-1,k},m_{i,j})\in M_{r,b}} P_{(R_{(i-1,k)},\tau_{i-1,k})}\Big)$.
\end{enumerate}
\vspace*{-.2cm}
Thus, bins in stage 1 are tile sets $T_j$, and each bin in any subsequent stage receives an initial set of assemblies consisting of the terminally produced assemblies from a subset of the bins in the previous stage as dictated by the edges of the mix graph.\footnote{The original staged model~\cite{DDFIRSS07} only considered $O(1)$ distinct tile types, and thus for simplicity allowed tiles to be added at any stage (since $\mathcal{O}(1)$ extra bins could hold the individual tile types to mix at any stage). Because systems here may have super-constant tile complexity, we restrict tiles to only be added at the initial stage.}
The \emph{output} of a staged system is the union of the set of terminal assemblies of the bins in the final stage.\footnote{This is a slight modification of the original staged model~\cite{DDFIRSS07} in that there is no requirement of a final stage with a single output bin. This may be a slightly more capable model, and so it is considered here. However, all results in this paper apply to both variants of the model.}
The output of a staged system is \emph{uniquely produced} provided each bin in the staged system uniquely produces its terminal assemblies.

\section{The 2HAM Unique Shape Verification Problem is $\coNP^{\NP}$-complete}
\label{sec:2HAM}

This section serves as a warm-up for the format and techniques used in later sections.
We begin by proving the 2HAM USV problem is in $\coNP^{\NP}$ by providing a (non-deterministic) algorithm for the problem that can be executed on such a machine.
This is followed by a reduction from a \SAT-like problem complete for $\coNP^{\NP}$ (\aesat{}).

\begin{definition}[2HAM unique shape verification (2HAM USV) problem]
Given a 2HAM system $\Gamma$ and shape $S$, does every terminal assembly of $\Gamma$ have shape $S$?
\end{definition}

\begin{theorem}\label{thm:2HAMUSVHard}
The 2HAM USV problem (for $\tau=2$ systems) is $\coNP^{\NP}$-hard.
\end{theorem}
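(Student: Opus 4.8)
The plan is to reduce from \aesat{}, the canonical $\coNP^{\NP}$-complete problem: given a quantified formula $\forall x\,\exists y\,\phi(x,y)$ with $x = x_1\cdots x_n$, $y = y_1\cdots y_m$, and $\phi$ in, say, 3CNF, I would build in polynomial time a $\tau=2$ 2HAM system $\Gamma$ and a shape $S$ such that every terminal assembly of $\Gamma$ has shape $S$ if and only if $\forall x\,\exists y\,\phi(x,y)$ holds. The two quantifier alternations are mapped onto the two distinct sources of branching in a unique-shape instance. The outer universal over $x$ is carried by the universal built into USV itself (``for every terminal assembly''): I design tiles so that each choice of $x$ grows its own maximal $x$-gadget $D_x$, distinct assignments give non-combinable assemblies (via $x$-indexed glues), and the incomplete $D_x$ has a shape strictly smaller than $S$. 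The inner existential over $y$ is carried by the defining feature of terminality in the 2HAM: an assembly is non-terminal exactly when some producible partner can attach to it.

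Concretely, for each assignment $y$ the system would also produce a helper assembly $H_y$, and the heart of the construction is an interface, read off from the clauses of $\phi$, along which $D_x$ and $H_y$ are $\tau$-combinable precisely when $\phi(x,y)$ is true; when they combine, a fixed cap region fills in so that the product $D_x + H_y$ has exactly the target shape $S$. Thus if $x$ is \emph{good} (some $y$ satisfies $\phi$), then $D_x$ is non-terminal and its only terminal descendant is the shape-$S$ completion, whereas if $x$ is \emph{bad} (no $y$ works) then no $H_y$ attaches and $D_x$ survives as a terminal assembly whose shape differs from $S$. Summing over all $x$, $\Gamma$ has a wrong-shape terminal assembly iff some $x$ is bad, i.e.\ iff $\exists x\,\forall y\,\neg\phi(x,y)$, which is exactly the negation of the \aesat{} instance.

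Two bookkeeping issues must then be handled so that the reduction is faithful. First, the helpers $H_y$ must never float free as spurious wrong-shape terminals; I would address this with a single ``blank'' partner assembly $E$ that is geometrically a generic left block but carries a universal interface, so that $E$ combines with every $H_y$, $E + H_y$ also has shape $S$, and $E$ never interferes with a genuine $D_x$ (leaving the terminality test for a bad $x$ intact). This makes every $H_y$ non-terminal. Second, I must argue both directions carefully: that the only terminal assemblies are the shape-$S$ completions $D_x + H_y$ and $E + H_y$ together with the incomplete $D_x$ for bad $x$ (enforcing, via disjoint glue namespaces, that no two $x$-gadgets, no two helpers, and no capped product combine further), and that the clause interface indeed realizes $\phi(x,y)$ faithfully.

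The main obstacle is precisely the design of the $D_x$--$H_y$ interface so that $\tau$-combinability is equivalent to satisfaction of $\phi(x,y)$, using only non-flexible glues at temperature $2$. The difficulty is that whole-assembly combination is gated by a single global stability condition (every edge cut of the joined bond graph has strength at least $2$), which makes a naive per-clause seam collapse: the internal rigidity of $H_y$ lets the two pieces join even when some clause is unsatisfied, while conversely any scheme that lets them combine loosely and then runs a tile-by-tile computation of $\bigwedge_c c$ risks leaving the intermediate unsatisfied assemblies as wrong-shape terminals -- which would silently collapse the reduction from $\forall\exists$ down to $\forall\forall$, yielding only \coNP-hardness. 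Getting the satisfaction check to gate the combination step itself, rather than some subsequent growth, while simultaneously guaranteeing that no partial or leftover assembly is ever a wrong-shape terminal, is the delicate core of the argument; it is also exactly the point at which single-bin 2HAM is weaker than the staged model, explaining why the bound obtained here is $\coNP^{\NP}$ rather than the conjectured \PSPACE.
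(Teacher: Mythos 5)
You correctly map the two quantifiers onto the two sources of universality in USV (the outer $\forall$ onto ``every terminal assembly,'' the inner $\exists$ onto non-terminality via attachment), and that matches the paper's high-level strategy. But your reduction's correctness rests entirely on a gadget you never construct: an interface along which $D_x$ and $H_y$ are $\tau$-combinable \emph{if and only if} $\phi(x,y)$ holds. You flag this yourself as the ``delicate core'' and your own analysis explains why the natural attempts fail --- seam strength is additive, so two satisfied clauses already contribute strength $2 = \tau$ regardless of a third, unsatisfied clause, while any ``combine loosely, then verify'' scheme strands unsatisfied intermediates as wrong-shape terminals, collapsing the reduction to \coNP-hardness. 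A proof whose central gadget is explicitly left open is not a proof; the gap sits exactly where the $\coNP^{\NP}$-hardness (as opposed to \coNP-hardness) would have to come from.

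The paper's construction shows this gadget is unnecessary, by a different decomposition of where $\phi$ gets evaluated. There, the 2HAM's nondeterminism grows one assembly per \emph{complete} assignment of all $n$ variables (both the $\forall$- and $\exists$-blocks), and $\phi$ is evaluated \emph{inside} each such assembly by ordinary tile growth (a modification of the Lagoudakis--LaBean \SAT{} evaluator), which tags the assembly with a single T or F glue. False assemblies then grow by themselves into the target shape $S$, so they never survive as wrong-shape terminals; true assemblies instead expose geometric bumps and dents encoding only the prefix $x_1,\dots,x_k$ together with two strength-1 glues. Test assemblies --- one per prefix --- attach cooperatively to any true assembly with matching prefix, again yielding shape $S$. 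Thus the existential over the trailing variables is realized as ``some T-tagged producible with this prefix exists,'' a property of the set of producibles rather than of a single seam, and the combination step only ever has to check prefix equality (geometry) plus a one-bit tag. If you want to salvage your two-piece architecture, the interface would have to be collision-based rather than glue-based (e.g., $D_x$ places a bump in the region for clause $c$ iff $x$ falsifies all $x$-literals of $c$, $H_y$ symmetrically, so the pieces fit iff every clause is satisfied, with a single fixed strength-2 glue doing the binding), but that construction, and the verification that no leftover piece becomes a wrong-shape terminal, is precisely what your write-up does not supply.
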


\begin{definition}[\aesat]
Given a 3-\SAT{} formula $\phi(x_1, x_2, \dots, x_k, x_{k+1}, \dots, x_n)$, is it true that for every assignment of $x_1, x_2, \dots, x_k$, there exists an assignment of $x_{k+1}, x_{k+2}, \dots, x_n$ such that $\phi(x_1, x_2, \dots, x_n)$ evaluates to \true?
\end{definition}

The \aesat{} problem was shown to be $\coNP^{\NP}$-complete by Stockmeyer~\cite{Stockmeyer-1976a} (see~\cite{Schaefer-2002a} for further discussion).

\begin{proof}
The reduction is from \aesat{}.
Roughly speaking, the system output by the reduction behaves as follows.
First, a distinct assembly encoding each possible assignment of the variables of the \aesat{} instance is assembled.
Further growth ``tags'' each assembly as either a \emph{true} or \emph{false} assembly, based upon the truth value of the input 3-\SAT{} formula $\phi$ for the variable assignment encoded by the assembly.

False assemblies further grow into a slightly larger target shape $S$.
A separate set of \emph{test} assemblies are created, one for each variable assignment of the variables $x_1, \dots x_k$.
Each test assembly attaches to any true assembly with the same assignment of these variables to form an assembly with shape $S$ - the same shape as false assemblies.

Terminal assemblies then consist of false assemblies and true-test assemblies with shape $S$, and possibly test assemblies.
A test assembly is terminal if and only if there is no true assembly for it to attach to, i.e. the assignment of variables $x_1, \dots, x_k$ has no corresponding assignment of the variables $x_{k+1}, \dots, x_n$ such that $\phi(x_1, \dots, x_n) = \true$.

\begin{figure}[t]
\includegraphics[width=\textwidth]{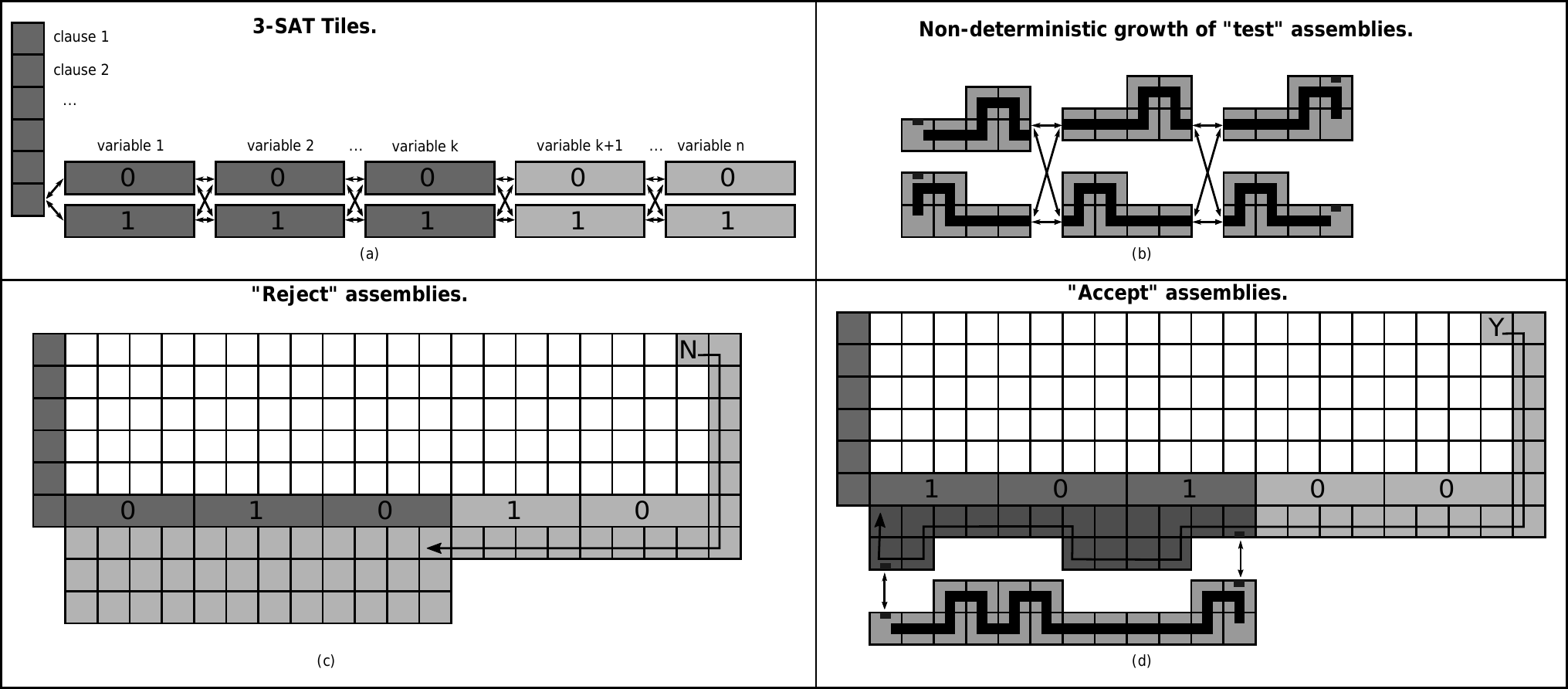}
\caption{Steps of the 2HAM USV $\coNP^{\NP}$-hardness reduction.}
\label{fig:2hamRedux}
\end{figure}

\textbf{\SAT{} assemblies.}
Consider a given input formula $C$ and input value $k$ for the \aesat{} problem.
From this input we design a corresponding 2HAM system $\Gamma = (T,2)$ and shape $S$ such that the terminal assemblies of $\Gamma$ share a common shape $S$ if and only if the \aesat{} instance is ``true'', i.e. each assignment of the variables $x_1$ through $x_k$ can be combined with some assignment of the variables $x_{k+1}$ through $x_m$ such that the 3-\SAT{} instance is satisfied.

The system has temperature~2, and the tile set $T$ of the system output by the reduction is sketched in Figure~\ref{fig:2hamRedux}.
The first subset of tiles is a minor modification of the commonly used 3-\SAT{} solving system from~\cite{Lagoudakis992ddna}.

For each variable $x_i$, the system has two tile subsets.
These collections assemble into $1\times 4$ assemblies with exposed north and south glues representing the values ``0'' and ``1'', respectively, encoding the assignment of a specific variable to true or false.
These $1 \times 4$ assemblies further assemble into $1 \times 4n$ assemblies encoding complete assignments of the variables $x_1$ to $x_n$.
The non-deterministic assembly process of 2HAM implies that such an assembly for every possible variable assignment will be assembled.

An additional column is attached to this bar of height equal to $m$, the number of clauses in the formula $C$ (Figure~\ref{fig:2hamRedux}).
An additional set of tiles are added that evaluate the 3-\SAT{} formula $\phi$ based upon the variable assignments encoded by the initial $1 \times 4n$ assembly following the approach of~\cite{Lagoudakis992ddna}.
These tiles place a tile in the upper right corner of the resulting assembly with exposed glue labeled ``T'' or ``F'', indicating the truth value of $\phi$ based upon the variable assignments.

The resulting assemblies are categorized as \emph{true} and \emph{false} assemblies.
Additional tiles are added so that every false assembly further grows, extending the left $4k$ columns (corresponding to the variables $x_1$ to $x_k$) southward by~3 rows, and the remaining right $4(n-k)$ columns southward by~1 row (Figure~\ref{fig:2hamRedux}(c)).
The resulting shape is the shape $S$ output by the reduction, i.e. the only shape assembled by the system if the solution to the \aesat{} instance is ``true''.

\textbf{Test assemblies.}
Additional tiles are also added so that true assemblies also grow southward, but extending the left $4k$ columns by various amounts based upon each variable assignment.
The result is a sequence of geometric ``bumps and dents'' that encode the truth values of these variables.

A set of \emph{test} assemblies with complementary geometry for each possible assignment of variables $x_1$ through $x_k$ are assembled (Figure~\ref{fig:2hamRedux}(b)).
Test assemblies use two strength-1 glues that cooperatively attach to any true assembly with a matching assignment of variables $x_1$ through $x_k$ (Figure~\ref{fig:2hamRedux}(d)).
The assembly formed by a test assembly attaching to a true assembly has shape $S$: the same shape as a false assembly.

\textbf{Terminal assemblies.}
If the solution to the \aesat{} instance is ``false'', there is some truth assignment for variables $x_1 \ldots x_k$ with no corresponding assignment of the variables $x_{k+1} \dots x_n$ such that $\phi(x_1, \dots, x_n)$ is ``true''.
Thus, the test assembly with this assignment of variables $x_1, \dots, x_k$ has no compatible true assembly to attach to - and this test assembly is a terminal assembly of $\Gamma$ with shape not equal to $S$.

On the other hand, if the solution to the \aesat{} instance is ``true'', every test assembly attaches to a true assembly and thus every terminal assembly (true-test assemblies and false assemblies) has shape $S$.
\end{proof}

\begin{theorem}
The 2HAM USV problem is in $\coNP^{\NP}$.
\end{theorem}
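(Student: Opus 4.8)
The plan is to show that the complement of the 2HAM USV problem lies in $\NP^{\NP} = \rmSigma^\P_2$; since a language is in $\coNP^{\NP} = \rmPi^\P_2$ exactly when its complement is in $\NP^{\NP} = \rmSigma^\P_2$, this places USV in $\coNP^{\NP}$. The complement asks whether \emph{some} terminal assembly of $\Gamma$ has shape different from $S$, so I would build an algorithm that makes one existential (nondeterministic) guess and then one query to an $\NP$ oracle. The existential guess is a candidate assembly $A$ together with an assembly tree witnessing its producibility; the oracle is invoked to certify the ``universal'' part, namely that $A$ leads to a terminal assembly that cannot grow further.

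Given the guessed assembly and its assembly tree, producibility is verified in polynomial time: the tree is a binary tree whose leaves are single tiles and whose internal nodes each record a $\tau$-combination, and each combination is checked by a minimum-cut computation on the bond graph $G_A$ to confirm $\tau$-stability. Checking $\mathrm{shape}(A) \neq S$ is immediate. The key structural fact I would establish is a size bound: if some terminal assembly has shape $\neq S$, then there is a \emph{producible} assembly of wrong shape whose size is at most $2|S|$. This follows from cutting the assembly tree. Walking from the root (a wrong-shaped assembly with more than $|S|$ cells, since $S$ has exactly $|S|$ cells) down toward the single-tile leaves, there is a first node whose subassembly exceeds $|S|$ cells; each of its two children has at most $|S|$ cells, so that node itself has between $|S|+1$ and $2|S|$ cells and already has the wrong shape.

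Terminality is the source of the jump in power and is exactly what the $\NP$ oracle supplies. An assembly $A$ is terminal iff it is not $\tau$-combinable with any producible assembly $B$; equivalently, $A$ is non-terminal iff there exists such a $B$ (with its own assembly tree and a placement) that is $\tau$-combinable with $A$. This existential statement is an $\NP$ query --- guess $B$, its assembly tree, and its offset, then verify $\tau$-combinability in polynomial time --- \emph{provided} $B$ can be taken polynomially bounded. Establishing this combinability size bound is the main obstacle: a priori $A$ might only combine with very large producible assemblies, so I would need to argue (again via an assembly-tree cut that retains only the portion of $B$ relevant to the interface and to $\tau$-stability of the union) that whenever $A$ admits any producible combining partner it admits one of polynomial size. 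With such a bound the oracle correctly decides terminality, and the whole computation is an $\NP$ machine making a single $\NP$ oracle call, hence in $\rmSigma^\P_2$.

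The subtlety I most expect to wrestle with is that the \emph{smallest} terminal assembly of wrong shape can itself be exponentially large: a system that uniquely assembles a large square, checked against a small target shape, has no small terminal assembly at all. The resolution is to avoid guessing the terminal assembly directly. Instead I would guess the polynomially-bounded wrong-shaped producible assembly $A$ guaranteed by the size bound, and then certify through the oracle that $A$ is a subassembly of \emph{some} terminal assembly; because $|A| > |S|$, any terminal assembly containing $A$ is automatically wrong-shaped. Making this completability certificate polynomial --- witnessing that growth from $A$ terminates rather than exhibiting the (possibly huge) terminal assembly --- is the part I would treat most carefully, and it is precisely where the combinability size bound must do the real work, since it lets one reason about each further combination step using only polynomially-sized assemblies.
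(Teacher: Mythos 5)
Your overall framing (place the complement in $\NP^{\NP}$) and your first size bound (cut the assembly tree to obtain a wrong-shaped producible assembly of size at most $2|S|$) are sound, and both ingredients do appear in the paper's proof. The genuine gaps are exactly the two lemmas you defer, and neither holds in the form you need. First, the standalone ``combinability size bound'' is false for general 2HAM systems: at temperature~2 one can build, via a binary counter, a system in which the only producible assemblies exposing a particular pair of adjacent strength-1 glues are completed counters of size roughly $2^n$; a small assembly whose attachment requires cooperative binding to both glues then has producible combining partners, but none of polynomial size. Cutting the partner's assembly tree does not rescue this, because the two interface tiles co-occur only in the completed assembly. Second, the ``completability certificate'' --- that your guessed producible $A$ is a subassembly of \emph{some} terminal assembly --- is not a statement an \NP{} oracle query can discharge: terminal assemblies can be exponentially large, and certifying that growth from $A$ terminates (as opposed to continuing forever, in which case $A$ lies in no terminal assembly at all) is an unbounded question. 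The paper's remark in Section~\ref{sec:pspace} makes this explicit: terminality-type questions with no a priori size bound on producibles become undecidable.

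The paper's proof avoids both obstacles by never certifying completability and never proving a combinability lemma. It reformulates the problem as a conjunction: the answer is ``true'' iff (1) every producible assembly of $\Gamma$ has size at most $|S|$, and (2) every assembly of size at most $|S|$ with shape other than $S$ is not terminal. Condition (1) is checked by the \coNP{} machine guessing a $\tau$-stable $A$ with $|S| < |A| \leq 2|S|$ and rejecting if $A$ is producible (producibility of a given assembly is in \P{} by Theorem~3.2 of~\cite{Doty-2014a}); your tree-cutting argument is precisely what justifies restricting to this size window. Condition (2) is checked by guessing a small wrong-shaped $B$ and asking the \NP{} oracle whether some producible $C$ with $|C| \leq |S|$ can attach to $B$. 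The size restriction on $C$ is sound not because of any combinability bound but because of the conjunction structure of the \coNP{} computation: on branches where condition (1) fails the machine already answers ``false'' (which is then the correct output), and whenever condition (1) holds, every producible partner automatically has size at most $|S|$. Note also that under this characterization an oversized producible assembly is \emph{by itself} a violation --- one never asks whether it extends to a terminal assembly --- reflecting the intended reading of the problem (the system uniquely \emph{produces} shape $S$, so boundedness of all producibles is necessary). Your literal reading, which forces you to exhibit an actual terminal assembly of wrong shape, is what creates the need for the unattainable completability certificate; adopting the two-condition characterization is the missing idea.
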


\begin{proof}
The solution to an instance $(\Gamma, S)$ of the 2HAM USV problem is ``true'' if and only if:
\begin{enumerate}
\item Every producible assembly of $\Gamma$ has size at most $|S|$.
\item Every assembly of size at most $|S|$ and without shape $S$ is not a terminal assembly.
\end{enumerate}
Algorithm~\ref{alg:2HAM-USV} solves the 2HAM USV problem by verifying each of these conditions, using an \NP{} subroutine to verify the second condition.
The algorithm is executed by a \coNP{} machine, implying that ``false'' is returned if any of the non-deterministic branches return ``false'', and otherwise returns ``true''.

\begin{algorithm}[h!]
\caption{\label{alg:2HAM-USV} A $\coNP^{\NP}$ algorithm for the 2HAM USV problem}
\begin{algorithmic}[1]
\State Non-deterministically select a $\tau$-stable assembly $A$ with $|S| < |A| \leq 2|S|$.
\If {$A$ is producible} \Comment{In \P{} by Theorem 3.2 of~\cite{Doty-2014a}}
\State \textbf{return} false.
\EndIf
\State Non-deterministically select a $\tau$-stable assembly $B$ with $|B| \leq |S|$ and shape not equal to $S$.
\If {not $\mathcal{F}(\Gamma, B, |S|)$} \Comment{Algorithm~\ref{alg:2HAM-USV-sub}}
\State \textbf{return} false.
\EndIf
\State \textbf{return} true.
\end{algorithmic}
\end{algorithm}

\begin{algorithm}[h!]
\caption{\label{alg:2HAM-USV-sub} An \NP{} algorithm subroutine of Algorithm~\ref{alg:2HAM-USV}}
\begin{algorithmic}[1]
\Procedure{$\mathcal{F}$}{$\Gamma, B, n$} \Comment{Returns whether $B$ is \emph{not} terminal.}
\State Non-deterministically select a $\tau$-stable assembly $C$ with $|C| \leq n$.
\If {$C$ cannot attach to $B$ at temperature $\tau$}
\State \textbf{return} false.
\EndIf
\If {$C$ is a producible assembly of $\Gamma$} \Comment{In \P{} by Theorem 3.2 of~\cite{Doty-2014a}}
\State \textbf{return} false.
\EndIf
\State \textbf{return} true.
\EndProcedure
\end{algorithmic}
\end{algorithm}
\end{proof}

\section{Staged Unique Assembly Verification is $\coNP$-hard}
\label{sec:level1}

\begin{figure}[t]
 	\centering \includegraphics[width=1.0\textwidth]{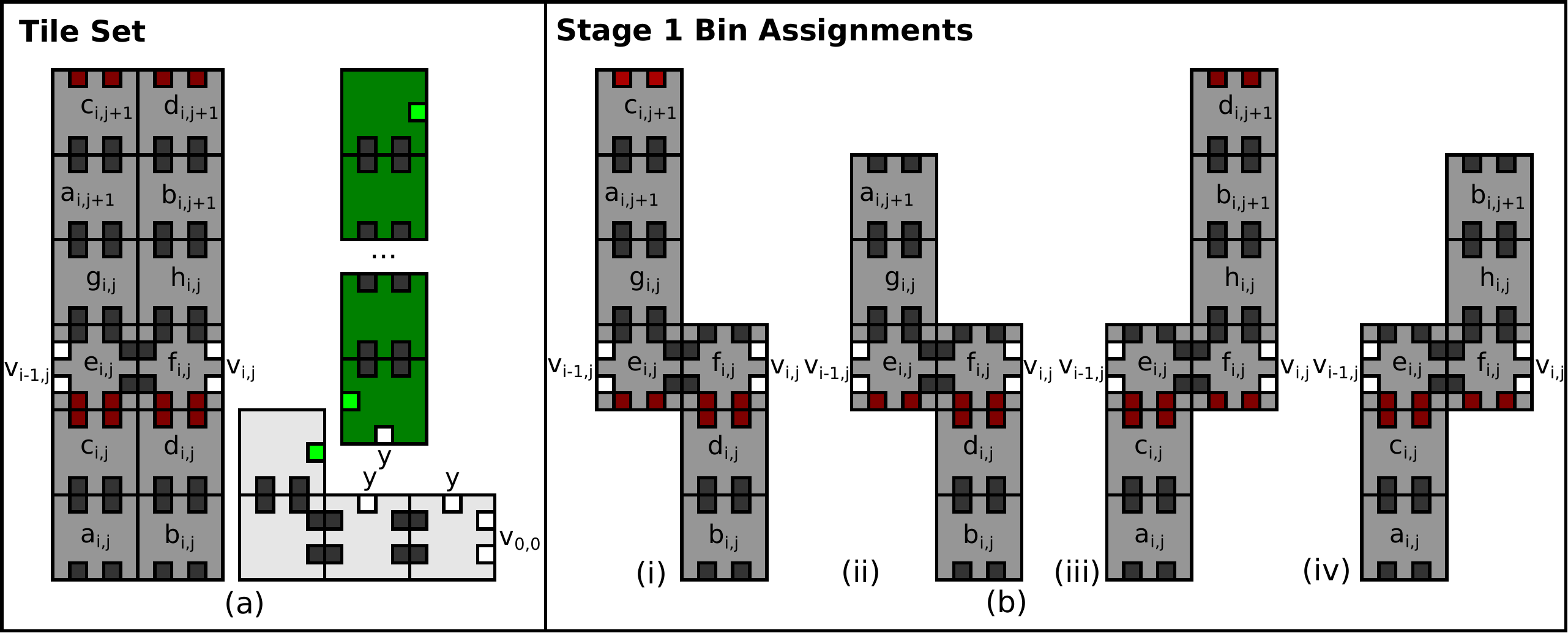}
 	\caption{(a) The tile set used in the staged \coNP-hardness reduction.  (b) The subsets of tiles included in separated initial bins within the first stage of the system.}
 	\label{fig:uavNPHardTileset}
\end{figure}

\begin{figure}[t]
 	\centering \includegraphics[width=1.0\textwidth]{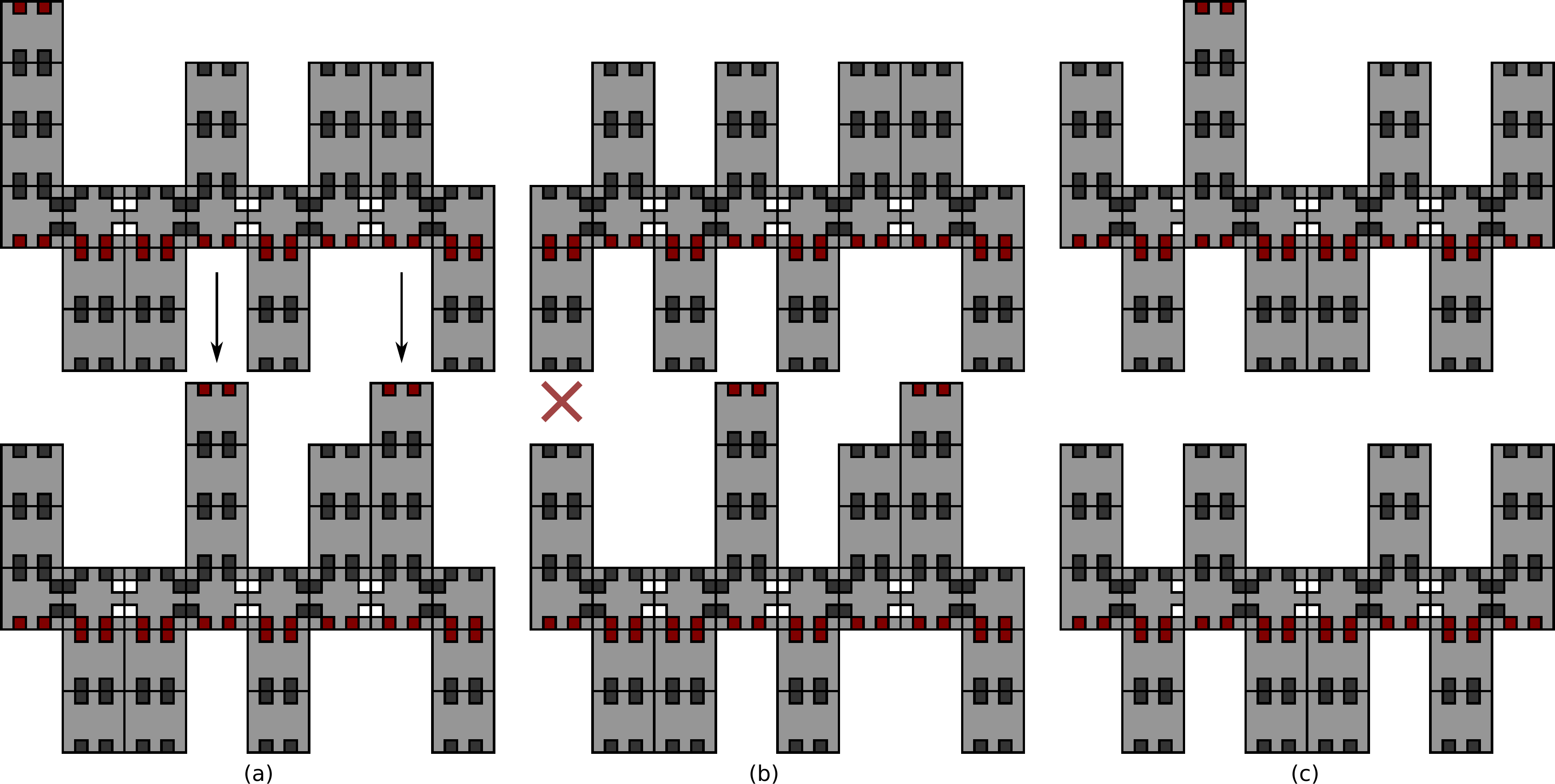}
 	\caption{In stage~2, rows non-deterministically form encoding each of the $2^n$ possible variable assignments.  In stage~3 the rows are combined allowing for geometrically compatible, sequential rows with exposed red glue to attach. (a) Combinable rows.  (b) Geometrically incompatible rows.  (c) Rows with no glues for attachment.}
 	\label{fig:uavRows}
\end{figure}

\begin{figure}[t]
 	\centering \includegraphics[width=1.0\textwidth]{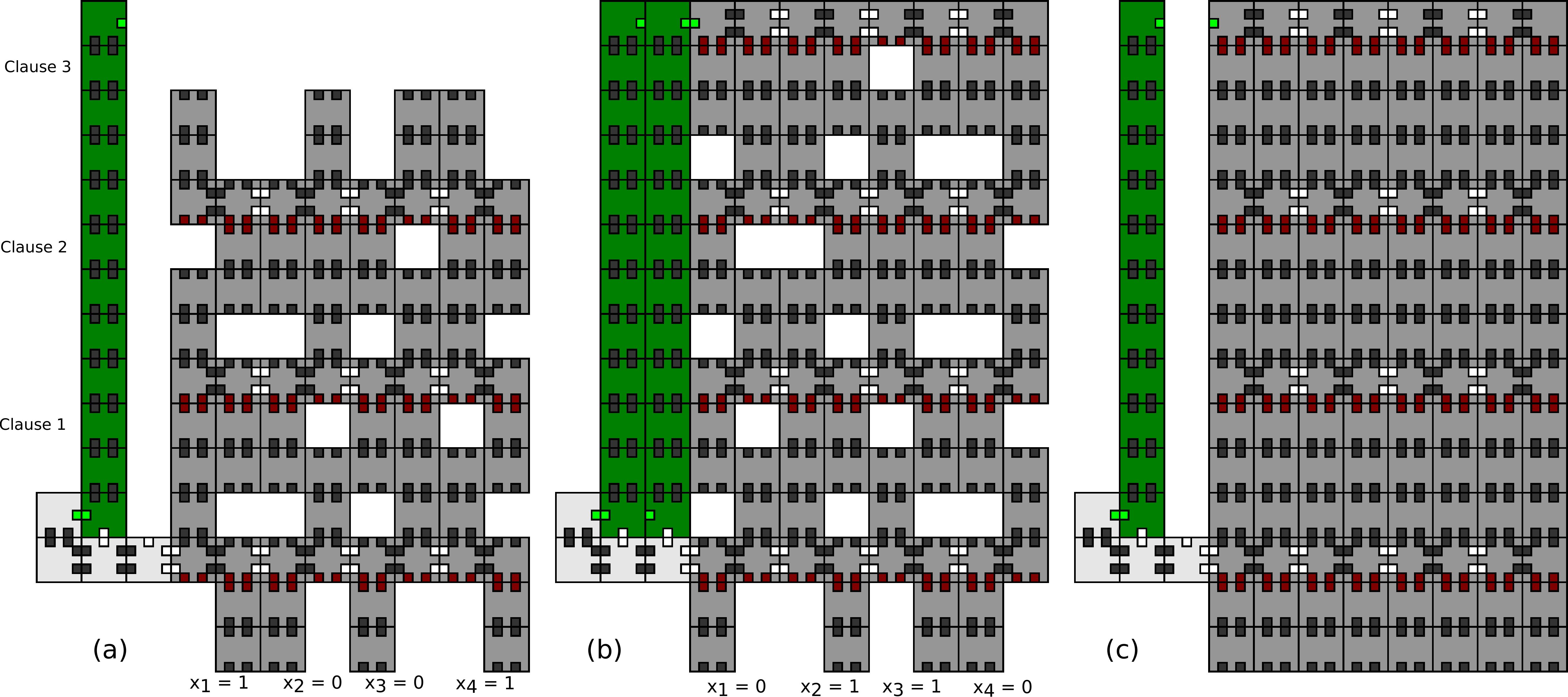}
 	\caption{(a) Non-satisfying variable assignments will not be able to grow from row 0 to row $m$.  (b) Assemblies encoding satisfying variable assignments will allow for complete assemblies with all rows, allowing for a green assembly to attach. (c) The target assembly $A$ given as output of the reduction.}
 	\label{fig:uavFinalAssembly}
\end{figure}

\begin{definition}[Staged unique assembly verification (Staged UAV) problem]
Given a staged system $\Gamma$ and an assembly $A$, does $\Gamma$ uniquely assemble $A$?
\end{definition}

\begin{theorem}
\label{thm:stagedUAVcoNP}
The staged UAV problem (for 4-stage systems at $\tau=2$) is \coNP-hard.
\end{theorem}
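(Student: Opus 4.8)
The plan is to reduce from the complement of 3-\SAT{} (deciding whether a 3-\SAT{} formula $\phi$ is \emph{unsatisfiable}), which is \coNP-complete. Given $\phi$ on $n$ variables and $m$ clauses, I would build a 4-stage, $\tau=2$ staged system $\Gamma$ together with a target assembly $A$ so that $\Gamma$ uniquely assembles $A$ if and only if $\phi$ is unsatisfiable. The guiding idea is to make $A$ a ``canonical'' assembly that is always producible and terminal, while arranging a separate gadget that contributes one \emph{additional}, distinct terminal assembly exactly when $\phi$ has a satisfying assignment. Unique production of $A$ then fails precisely when a satisfying assignment exists, matching the \coNP{} side.

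Concretely, following the row/tower mechanism sketched in Figures~\ref{fig:uavRows} and~\ref{fig:uavFinalAssembly}, stage~2 would use 2HAM nondeterminism within its bins to assemble, for each of the $2^n$ variable assignments, a family of $m+1$ ``row'' assemblies, each row carrying the assignment (in its glues) and a marker for whether the corresponding clause is satisfied. In stage~3 the rows for a fixed assignment stack vertically, where the cooperative strength-$1$ (``red'') glues together with bump/dent geometry let row $i$ attach above row $i-1$ only when clause $i$ is satisfied; thus a tower grows from row $0$ all the way to row $m$ exactly for satisfying assignments, and halts early otherwise. To avoid leaving stuck partial towers as spurious terminal assemblies, I would add \emph{filler} tiles that complete \emph{any} partial tower to one fixed canonical full assembly $A$. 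A tower that actually reaches row $m$ (a satisfying assignment) additionally exposes a glue to which a ``green'' cap attaches in stage~4, yielding an assembly $A' \neq A$ that is not a subassembly of $A$.

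For correctness I would argue two directions. If $\phi$ is unsatisfiable, no tower reaches row $m$, so the green cap never attaches; every producible assembly (rows, partial towers, completed canonical towers) can be grown via filler into $A$, so $A$ is the unique terminal assembly and $\Gamma$ uniquely assembles $A$. If $\phi$ is satisfiable, some tower reaches row $m$ and acquires the green cap, producing the terminal assembly $A'$; since $A'$ is distinct from $A$ and not a subassembly of it, $\Gamma$ does not uniquely assemble $A$. The main obstacle --- and the part demanding the most care --- is ruling out \emph{all} spurious terminal assemblies in the unsatisfiable case: I must design the glues, geometry, and the assignment of tile subsets to the stage-$1$ bins so that every intermediate assembly (stray rows, partially stacked towers, and any unintended combinations of distinct assignments) provably grows into the single canonical $A$, and so that $A$ itself is the only terminal output when no satisfying assignment exists. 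Getting the filler and the inter-row glues to coexist without creating extra terminal assemblies, while still keeping $A'$ reachable only for satisfying assignments, is the crux of the construction.
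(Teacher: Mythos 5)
Your overall architecture matches the paper's: reduce from 3-\SAT{} unsatisfiability, build row assemblies encoding variable assignments that stack only when the shared assignment satisfies each successive clause, mark completed (satisfying) towers with a green assembly, and use filler tiles to absorb partial towers into a canonical target. However, there is a genuine gap in the \emph{timing} of the green cap, and it is fatal as proposed. You attach the cap in stage~4, which is necessarily the same stage in which the filler tiles act (stacking happens in stage~3, and filler cannot be introduced earlier without destroying the clause check). In irreversible self-assembly an assembly carries no memory of how it was built: a partial tower completed by filler tiles is tile-for-tile identical to a genuinely assembled full tower, hence exposes the same cap glue. So with caps and filler in the same mix, every tower---satisfying or not---reaches row $m$ and gets capped; the capped assembly becomes the unique terminal assembly regardless of $\phi$, and the reduction collapses. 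If you instead try to make filler-completed towers structurally different so the cap cannot attach to them, you hit the symmetric problem: in the unsatisfiable case the caps have nothing to attach to, so free caps are themselves terminal assemblies of the final stage, again ruining uniqueness of $A$.

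The paper resolves exactly this with a temporal separation your proposal is missing: the green bar is mixed in during stage~3, when only genuinely satisfying towers possess $\mathrm{row}_m$, and it is deliberately \emph{excluded} from the stage-4 mix, so filler-completed towers can never acquire it. The leftover-cap problem is handled by giving the green bar a second, unconditional attachment site on any assembly containing $\mathrm{row}_0$ (via the grey tiles); since $\mathrm{row}_0$ assemblies always exist, no free green bar is ever terminal in stage~3, so none survives into stage~4 (a pre-combined grey-green assembly is supplied separately in stage~4 for towers that still need it). As a consequence the target assembly $A$ itself contains one green bar, and satisfying towers carry \emph{two}---this extra copy, attached before filler exists, is the persistent witness of satisfiability. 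Your write-up correctly identifies the crux (ruling out all spurious terminal assemblies), but the specific mechanism you sketch cannot be made to work without this reordering of cap versus filler across stages.
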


\begin{proof}
The reduction is from 3-\SAT{}, outputting a staged system $\Gamma$ and assembly $A$ such that the 3-\SAT{} instance is satisfiable if and only if $A$ is \emph{not} the unique terminal assembly of $\Gamma$. 
We reduce from 3-SAT:  Given a 3-SAT formula $\phi$, we design a staged assembly system and an assembly $A$ such that $\phi$ is \emph{not} satisfied if and only if $A$ is uniquely assembled by $\Gamma$.  

\textbf{The tileset.}  The tiles used in our construction are shown in Figure~\ref{fig:uavNPHardTileset}(a).  
In particular, for each variable $x_i \in \{x_1, x_2, \dots, x_n\}$ and clause $c_j \in \{c_1, c_2, \dots, c_m\}$ in $\phi$, there is a block of tiles labeled $a_{i,j},b_{i,j},c_{i,j},d_{i,j},e_{i,j},f_{i,j},g_{i,j}$.
The set of tile types for each block is denoted $\mathrm{block}_{i, j}$.

The strength-2 ($\tau=2$) glues connecting adjacent tiles are unique with respect to adjacent tiles, and are unlabelled in the figures for clarity.  
Note that for each block $(i, j)$, the top four tiles of the block occupy the same locations as the bottom four tiles of block $(i,j+1)$.  
Finally, the tileset includes a length $4m$ chain of \emph{green} tiles, with each green tile sharing a strength-2 glue with its neighbors, along with four light-grey tiles which together attach to the green assembly.

\textbf{Stage 1: variable assignments.}
The specific formula $\phi$ is encoded within the output staged system via the initial choice of tiles placed into a $O(1)$-sized collection of stage-1 bins.
For each variable $x_i$ and clause $c_j$ combination, we select two subsets of the $\mathrm{block}_{i,j}$ tileset.  
The first subset encodes a variable choice of ``false'' for $x_i$.
The tile sets in Figure~\ref{fig:uavNPHardTileset}(b)(i) and (iv) are used if $x_i$ satisfies (and $\overline{x_i}$ does not satisfy) clause $c_j$, respectively.
Similarly, the tile sets in Figure~\ref{fig:uavNPHardTileset}(b)(ii-iii) are used if $x_i$ does not (and $\overline{x_i}$ does satisfy) clause $c_j$.

Beyond utilizing two types of $\mathrm{block}_{i,j}$ tile sets, tile sets are further distinguished between odd and even values of $i$ and $j$.
In total,~16 distinct bins (satisfied or not, negated or not, odd or even $i$, odd or even $j$) are used.  
% Not sure if 16 is right...should it be 8 with negated not a thing?

We include the grey and green tiles of Figure~\ref{fig:uavNPHardTileset}(a) separately in two additional bins.  
An additional four bins are used in the construction to maintain a set of single copies of all tiles used within the system.  
Separating these tile subsets into four bins ensures that the tiles do no interact (until mixed with other assemblies at a later stage).

\textbf{Stage 2: assembling rows.}
In stage 2 we combine all $\mathrm{block}_{i,j}$ assemblies for even $j$ into one bin, and all $\mathrm{block}_{i,j}$ assemblies for odd $j$ into a second bin.  
Within each bin and for each value $j$, rows encoding each possible variable assignment assemble non-determistically via attaching $0-\mathrm{block}_{i,j}$ and $1-\mathrm{block}_{i,j}$ assemblies for each $i \in \{1, 2, \dots, n\}$.
We refer to these assemblies as $\mathrm{row}_j$ assemblies.
There are $2^n$ such assemblies for each $j$ - one per variable assignment.
Example $\mathrm{row}_j$ assemblies are shown in Figure~\ref{fig:uavRows}.

\textbf{Stage 3: combining rows with shared assignments and satisfied clauses.}
Stage~3 is where the \emph{real} action happens.  
All $\mathrm{row}_j$ assemblies are combined, along with the green and grey assemblies of Figure~\ref{fig:uavNPHardTileset}.  

Consider the possible assembly of a $\mathrm{row}_j$ and a $\mathrm{row}_{j+1}$ assembly.  
If the two respective rows encode distinct variable assignments, geometric incompatibility prohibits any possible connection (Figure~\ref{fig:uavRows}(b)).  
If the rows encode the same truth assignment, then the rows may attach if any of the $\mathrm{row}_{j}$ variable pieces expose the extended tip via the red $\tau=2$ strength glues (Figure~\ref{fig:uavRows}(a)).  
Such an attachment indicates that the variable assignment of both rows satisfies $c_j$.  
If the variable assignment encoding does not satisfy $c_j$, no extended tip exists and the rows cannot attach (Figure~\ref{fig:uavRows}(c)).  

A satisfying assignment of $\phi$ corresponds to $m$ rows attaching to form a complete ``satisfying'' assembly (Figure~\ref{fig:uavFinalAssembly}(b)).  
The green assembly attaches cooperatively to such assemblies using the $\mathrm{row}_m$ assembly glue and a glue from the grey tiles, which attach uniquely to $\mathrm{row}_0$.  
The attachment of a green assembly verifies that all rows are present and the variable assignment satisfies $\phi$.

A second copy of the green assembly attaches to any assembly containing $\mathrm{row}_0$, regardless of whether all rows are present or not (Figure~\ref{fig:uavFinalAssembly}(a)).  
In a separate bin, the green assembly tiles and grey assemblies are combined, yielding a combined grey-green product (for mixing in stage~4).

\textbf{Stage 4: merging assignments.}
In stage~4, the set of all $\mathrm{block}_{i,j}$ individual tiles are added to the assemblies constructed in stage~3 as well as the the grey-green assembly produced in the previous stage.  
Note that the green assembly is \emph{not} an input assembly to this mixing.

Since all $\mathrm{block}_{i,j}$ assemblies are included, each terminal assembly from stage~3 may grow into the unique terminal assembly shown in Figure~\ref{fig:uavFinalAssembly}(c) with one exception: assemblies from stage~3 encoding satisfying variable assignments.
These assemblies have one additional copy of the green bar assembly attached. 
Therefore, the assembly of Figure~\ref{fig:uavFinalAssembly}(c) is uniquely assembled if an only if no such satisfying assembly exists.
\end{proof}

\section{Staged Unique Assembly Verification is $\coNP^{\NP}$-hard}
\label{sec:level2}

\begin{figure}[t]
 	\centering \includegraphics[width=\textwidth]{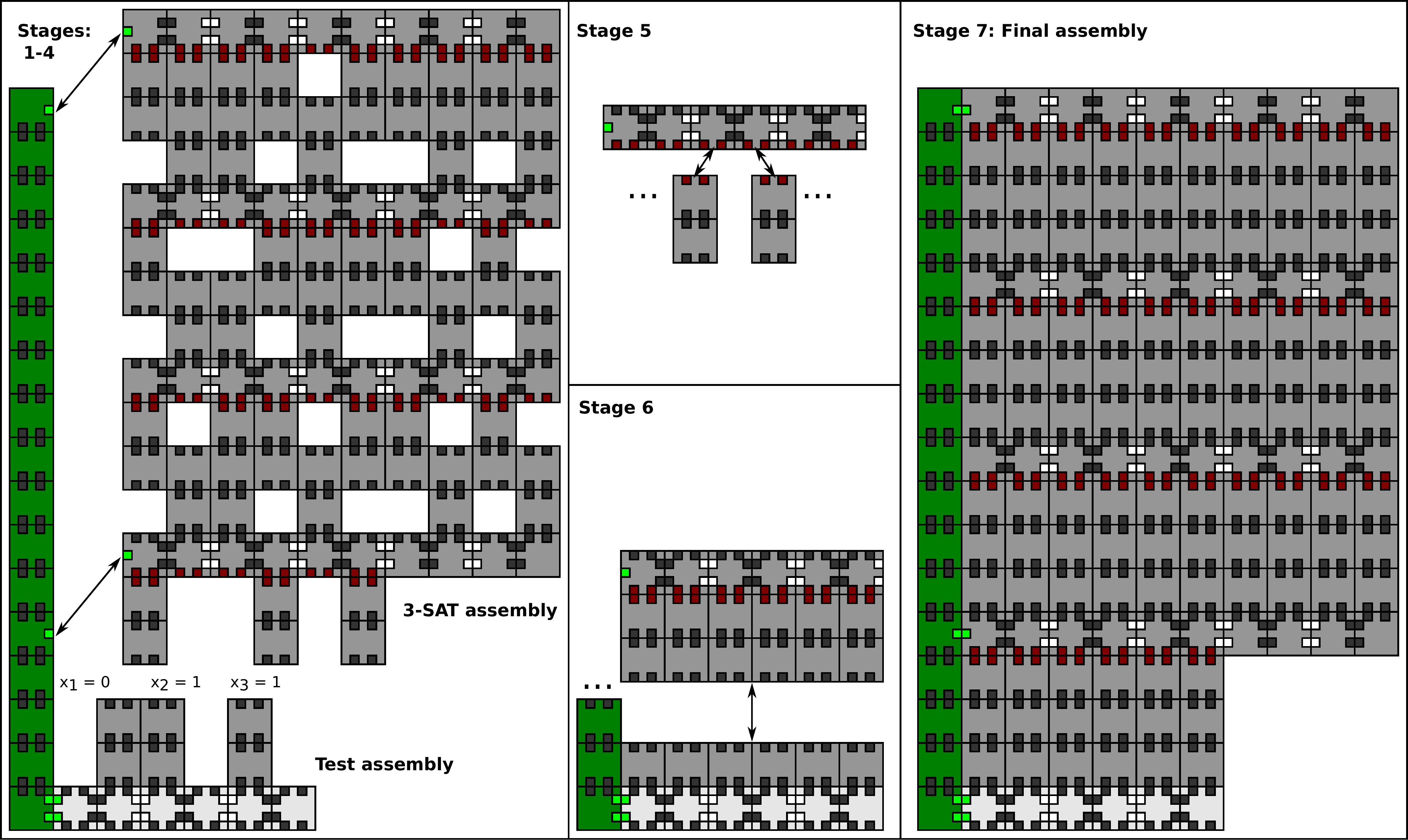}
 	\caption{The assemblies at respective stages for the $\coNP^{\NP}$-hardness reduction for the staged UAV problem.}
 	\label{fig:uavLevel2}
\end{figure}

\begin{theorem}\label{thm:stagedUAVExtraHard}
The staged UAV problem (for $\tau=2$ 7-stage systems) is $\coNP^{\NP}$-hard.
\end{theorem}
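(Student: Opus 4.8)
The plan is to reduce from \aesat{}, extending the \coNP-hardness construction of Theorem~\ref{thm:stagedUAVcoNP}. Given a 3-\SAT{} formula $\phi(x_1,\dots,x_k,x_{k+1},\dots,x_n)$, I would output a $\tau=2$ staged system $\Gamma$ and assembly $A$ so that $A$ is uniquely assembled by $\Gamma$ if and only if for every assignment $\alpha$ of the universal variables $x_1,\dots,x_k$ there is an assignment $\beta$ of the existential variables $x_{k+1},\dots,x_n$ with $\phi(\alpha,\beta)=\true$. Equivalently, a terminal assembly other than $A$ should appear precisely when some $\alpha$ admits no satisfying $\beta$. As in Theorem~\ref{thm:stagedUAVcoNP}, the first stages non-deterministically assemble one row per variable, combine rows that agree on their assignment and whose clause is satisfied, and thereby single out the \emph{complete} assemblies encoding a full satisfying assignment; the incomplete (non-satisfying) assemblies are completed into a canonical target $A$ by mixing in the individual block tiles.

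The new ingredient handles the outer universal quantifier with a ``test'' mechanism in the spirit of the 2HAM USV reduction (Theorem~\ref{thm:2HAMUSVHard}). When an assembly becomes complete, the completion signal is used in an intermediate stage to leave an \emph{$\alpha$-slot}: a narrow cavity on the $4k$ universal columns whose geometry encodes $\alpha$, rather than filling that region solidly. In a separate family of bins I would non-deterministically build, for each universal assignment $\alpha$, a \emph{test} assembly $T_\alpha$ whose boundary geometry is exactly complementary to the $\alpha$-slot. In a late stage the tests are mixed with the slotted satisfying assemblies and with the fill-in tiles: a test $T_\alpha$ enters the matching $\alpha$-slot of any satisfying assembly with universal part $\alpha$, after which the region is solid and the whole assembly completes to exactly $A$ (see Figure~\ref{fig:uavLevel2}). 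A test $T_\alpha$ with no matching slot to enter is instead a terminal assembly whose shape differs from $A$.

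Correctness then follows the two directions of \aesat{}. If every $\alpha$ has a satisfying $\beta$, then for each $\alpha$ a satisfying assembly---hence an $\alpha$-slot---exists, so every $T_\alpha$ is consumed and every producible assembly grows into $A$, making $A$ uniquely assembled. If instead some $\alpha^\ast$ has no satisfying $\beta$, then no $\alpha^\ast$-slot is ever produced, so $T_{\alpha^\ast}$ is terminal and distinct from $A$, witnessing non-unique assembly. Since \aesat{} is $\coNP^{\NP}$-complete, this establishes $\coNP^{\NP}$-hardness, and the reduction uses only $\tau=2$ and a constant number of stages, which I would arrange into $7$.

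The hard part will be enforcing \emph{exactness}---that every ``good'' assembly is literally $A$, not merely a shape---despite the $\alpha$-dependent intermediate geometry. My plan is to make the $\alpha$-encoding purely geometric while using a single uniform tile type throughout the slot/test region, so that once an $\alpha$-slot is filled the region is solid and $\alpha$-independent, matching the solidly filled region of a non-satisfying assembly tile-for-tile. The key technical claim is that the $\alpha$-slot is \emph{unfillable} by the individual fill-in tiles: at $\tau=2$ a single tile entering the narrow cavity would bind with strength only $1$, whereas the pre-assembled $T_\alpha$ attaches cooperatively along its length. I must also verify that geometry forces $T_\alpha$ to enter only $\alpha$-slots (never a wrong-$\alpha$ slot, never a solid non-satisfying assembly), that an unmatched $T_\alpha$ is genuinely terminal, and that each bin uniquely produces its terminal assemblies; together with the slot/fill-in interaction, these are where the delicacy of the construction lies.
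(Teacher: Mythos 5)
Your high-level plan is the same as the paper's: reduce from \aesat{}, reuse the row/block \SAT{} assemblies of Theorem~\ref{thm:stagedUAVcoNP}, build one test assembly per universal assignment $\alpha$, and let an unmatched test be the terminal witness of a ``false'' instance. The gap is in the central mechanism you invent to mediate test attachment: the ``$\alpha$-slot'' that is supposed to exist on satisfying assemblies while the same region of non-satisfying assemblies is solid. In irreversible tile assembly this dichotomy cannot be realized as described. Tiles are never removed, so you cannot start solid and carve a cavity into exactly the satisfying assemblies; and ``non-satisfying'' is the \emph{absence} of a completion signal, so it cannot trigger the growth that would solidly fill the region of exactly the non-satisfying assemblies --- within a single bin there is no ``else'' branch. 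Both kinds of assemblies contain the very same $\mathrm{row}_0$ substructure, so whatever cavity geometry that substructure exposes is exposed by both; and since producibility quantifies existentially over attachment orders, even if fill-in tiles were meant to solidify the region of non-satisfying assemblies in a late stage, the not-yet-filled (slot-exposing) intermediates remain producible.

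This is not cosmetic; it breaks the ``no'' direction. Because your $T_\alpha$ binds ``cooperatively along its length'' entirely within the universal columns, nothing about its attachment requires the presence of all $m$ rows. In a false instance, take $\alpha^\ast$ with no satisfying extension: incomplete assemblies containing $\mathrm{row}_0$ with universal part $\alpha^\ast$ are always producible (the run of rows up to the first unsatisfied clause, for any choice of $\beta$), they expose the same slot geometry, so $T_{\alpha^\ast}$ attaches to one of them and is not terminal; the system then uniquely assembles $A$, and your reduction answers ``yes'' on a ``no'' instance. The paper's construction contains exactly the two ingredients your sketch is missing. First, a test attaches via two cooperative strength-$1$ glues at opposite ends --- the geometric teeth at $\mathrm{row}_0$ (enforcing agreement on $\alpha$) and a green-bar glue available only when $\mathrm{row}_m$ is present --- so tests attach if and only if the \SAT{} assembly is complete, i.e., satisfying. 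Second, two dedicated stages implement the ``else'' branch using staging itself: after tests cap every satisfying assembly (Stage~4), duples attach to the still-exposed teeth of non-satisfying assemblies (Stage~5) and a teeth-less grey/green cap attaches to those (Stage~6), so they also converge to the target $A$ while becoming geometrically incompatible with leftover tests; Stage~7 then withholds the green tiles and the test-region rows so that leftover tests remain terminal. Without these mechanisms (or equivalents), the exactness and terminality claims deferred to the end of your proposal cannot be verified, because for the construction as described they are false.
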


\begin{proof}
We reduce from \aesat{} by combining ideas from the reductions of Theorem~\ref{thm:2HAMUSVHard} and~\ref{thm:stagedUAVcoNP}.

\textbf{Stages 1-3: the \SAT{} assemblies.}  
The first~3 stages follows those of the reduction in Theorem~\ref{thm:stagedUAVcoNP} but without the inclusion of the green assembly and light grey tiles.  
The result is a collection of assemblies encoding satisfying variable assignments with all $m$ rows, as well as partial assemblies of less than $m$ rows encoding non-satisfying assignments.
For clarity, the bottom half of the $j=0$ blocks for values $i>k$ are removed, exposing the ``geometric teeth" only for the first $k$ variables.

\textbf{Stages 1-3: the test assemblies.}  
Additionally, in a separate set of bins, we non-deterministically generate a set of \emph{test} assemblies.  
The test assemblies are similar to row assemblies and generated in a similar fashion.  
An example test assembly is shown in Figure~\ref{fig:uavLevel2} (Stages 1-4).  
A test assembly for each of the $2^k$ possible truth assignments of $x_1, x_2, \dots, x_k$ is grown, and a green bar assembly is attached to the side of each test assembly.

\textbf{Stage 4: the magic happens.}
The \SAT{} assemblies and test assemblies are combined in a bin.
Test assemblies attach to \SAT{} assesmblies encoding satisfying variable assignments by utilizing cooperative bonding based on the two strength-1 green glues on the green assembly.  
\SAT assemblies encoding non-satisfying assignments must each lack the topmost or bottommost row, and therefore cannot attach to a test assembly.

Due to the geometric interlocking teeth from the test assembly and the bottom of \SAT{} assemblies, test assemblies may only attach to \SAT{} assemblies that encode the same variable assignment (of variables $x_1, x_2, \dots, x_k$).
Stages 1-4 of Figure~\ref{fig:uavLevel2} show an example test assembly and a attaching \SAT{} assembly.

Note that if there exists a truth assignment for $x_1, x_2, \dots, x_k$ with no satisfying assignment of the remaining variables $x_{k+1}, x_{k+2}, \dots, x_n$, then the corresponding test assembly does not attach to \emph{any} \SAT{} assembly and is a terminal assembly of this bin.
On the other had, if every assignment of the variables $x_1, x_2, \dots, x_k$ has at least one satisfying assignment of the remaining variables, i.e. the solution \aesat{} instance is ``true'', then there are no terminal test assemblies of this bin

\textbf{Stage 5: tagging non-satisfying assignments.} In Stage~5, we add preassembled duples which attach to the bottom of any assembly containing row 0 and encodes a non-satisfying variable assignment.
This attachment ensures that in subsequent stages, these assemblies will be geometrically incompatible with any remaining test assemblies from Stage~4.

It is possible that some duples have no non-satisfying \SAT{} assembly to attach to.
As a solution, an additional height-1 assembly of the row-0 assembly that ``absorbs'' each duple is added at this stage.
The subsequent stages enable these, as well as all other \SAT{} assemblies, to grow into a single common (potentially) unique assembly.

\textbf{Stage 6: attaching test assemblies.}
The result of Stage~5 is mixed with an assembly consisting of:
\begin{itemize}
\item The light-grey bar of the test assemblies.
\item A second complete layer of dark grey tiles.
\item The green bar.
\end{itemize}
This assembly attaches to any non-satisfying \SAT{} assembly that includes row~0, ensuring that all assemblies containing row~0 now have a version of the test assembly attached (Stage~6 in Figure~\ref{fig:uavLevel2}).

\textbf{Stage 7: merging.}  In the final stage, every individual tile of the target assembly (seen in Stage~7 of Figure~\ref{fig:uavLevel2}) is added to the result of Stage~6, with the exception of the green tiles and the tiles in rows~1 through~5 of the \SAT{} assemblies.

These tiles complete each \SAT{} assembly in the assembly in Figure~\ref{fig:uavLevel2} (Stage~7).  
Morever, the height-1 assembly used to absorb duples from Stage~5 grows into the assembly from Figure~\ref{fig:uavLevel2} (Stage~7).  
However, because of the lack of tiles from rows~1 through~5, any leftover test assembly from Stage~4 remains terminal.

Thus the target assembly is the unique terminal assembly of the system if and only if the solution to the \aesat{} instance is ``yes''.
\end{proof}

Observe that every staged system output by the reduction has the property that if it does not have a unique terminal assembly, then it also does not have a unique terminal shape.
Thus the same reduction suffices to prove that the staged USV problem is $\coNP^{\NP}$-hard.

\begin{corollary}
The staged USV problem is $\coNP^{\NP}$-hard.
\end{corollary}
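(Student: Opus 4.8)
The plan is to leverage the fact that the preceding reduction (Theorem~\ref{thm:stagedUAVExtraHard}) already does nearly all the work, and to argue that the \emph{same} staged system $\Gamma$ and target assembly $A$ witness hardness for the shape problem. The key structural observation is that the reduction has been engineered so that the only way $\Gamma$ fails to uniquely produce $A$ is via a leftover test assembly from Stage~4 — and crucially, such a test assembly differs from $A$ not merely as an assembly but \emph{in shape}. First I would state precisely the dichotomy established in the proof of Theorem~\ref{thm:stagedUAVExtraHard}: if the \aesat{} instance is ``true'', then $A$ is the unique terminal assembly (hence trivially every terminal assembly has shape equal to that of $A$, so the USV instance is ``true''); and if the \aesat{} instance is ``false'', then there is at least one terminal test assembly in addition to $A$.

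The heart of the argument is the ``false'' case. I would show that any surviving terminal test assembly has a shape distinct from $S := \{\mathrm{dom}(A')\colon A' \in A\}$. This is where the geometric design pays off: a test assembly, lacking the tiles of rows~1 through~5 of the \SAT{} assemblies that Stage~7 withholds, cannot grow into $A$; I would point to the fact (recorded in the proof) that it ``remains terminal'' and verify that its domain differs in size and footprint from that of $A$ — for instance, the target assembly contains a full block of rows that the bare test assembly simply does not have, so the two cannot be translates of one another and their domains are incomparable. Hence in the ``false'' case $\Gamma$ has two terminal assemblies of different shapes, making the USV instance ``false''.

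Combining the two cases gives that the \aesat{} instance is ``true'' if and only if every terminal assembly of $\Gamma$ has shape $S$, which is exactly the staged USV predicate. Since \aesat{} is $\coNP^{\NP}$-complete and the reduction is the same polynomial-time reduction already constructed, this establishes $\coNP^{\NP}$-hardness of the staged USV problem. This is the content captured by the remark immediately preceding the corollary: the reduction has the property that a non-unique terminal assembly forces a non-unique terminal \emph{shape}.

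The main obstacle I anticipate is not conceptual but verificational: one must confirm that the reduction really does guarantee a shape difference, not merely an assembly difference. In principle a system could fail unique assembly while still having all terminal assemblies share a common shape (two distinct tilings of the same footprint), and in that situation the UAV reduction would \emph{not} transfer to USV. So the crux is to check that every ``extra'' terminal object the reduction can produce in the ``false'' case — namely leftover Stage~4 test assemblies — has a footprint genuinely different from $S$, and that no other spurious terminal assemblies of shape $S \ne$ can arise. Given the explicit geometry (the withheld rows~1--5 and the interlocking teeth), this check is routine, but it is the step that actually justifies promoting the UAV hardness result to a USV hardness result.
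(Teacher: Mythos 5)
Your proposal is correct and takes essentially the same route as the paper: the paper's own argument is precisely the observation that for every system output by the Theorem~\ref{thm:stagedUAVExtraHard} reduction, failure of unique assembly (a leftover terminal test assembly) forces failure of unique \emph{shape}, so the same reduction establishes $\coNP^{\NP}$-hardness of staged USV. Your additional verification that the surviving test assemblies genuinely differ in footprint from the target assembly (due to the withheld rows~1--5) is exactly the check the paper's terse ``observe that\ldots'' remark leaves implicit, and your caveat about the UAV-to-USV transfer not holding in general is well placed.
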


\section{Staged \PSPACE{} containment}
\label{sec:pspace}

Here we prove that the staged UAV and USV problems are in \PSPACE{}.
Parameterized versions of the results are also obtained; these prove that both problems restricted to systems with any \emph{fixed} number of stages lie in the polynomial hierarchy.
Both results are obtained via upper bounds on the complexities of the following three problems: 

\begin{definition}[Stage-$s$ producible-in-bin verification (\pibv{s}) problem] 
Given a staged system $\Gamma$, a bin $b$ in stage $s$ of $\Gamma$, an assembly $A$, and an integer $n$:
\begin{enumerate}
\item is $A$ a producible assembly of $b$? 
\item and does every producible assembly of every bin in stage $s-1$ of $\Gamma$ have size at most $n$? 
\end{enumerate}
\end{definition}

\begin{definition}[Stage-$s$ undersized-in-bin verification (\uibv{s}) problem]
Given a staged system $\Gamma$, a bin $b$ in stage $s$ of $\Gamma$, and an integer $n$: 
\begin{enumerate}
\item and does every producible assembly of $b$ have size at most $n$? 
\item and does every producible assembly of every bin in stage $s-1$ of $\Gamma$ have size at most $n$? 
\end{enumerate}
\end{definition}

\begin{definition}[Stage-$s$ terminal-in-bin verification (\tibv{s}) problem] 
Given a staged system $\Gamma$, a bin $b$ in stage $s$ of $\Gamma$, an assembly $A$, and an integer $n$:
\begin{enumerate}
\item is $A$ a terminal assembly of $b$?
\item and does every producible assembly of $b$ have size at most $n$? 
\item and does every producible assembly of every bin in stage $s-1$ of $\Gamma$ have size at most $n$? 
\end{enumerate}
\end{definition}

The statements and proofs of the following results use terminology related to the polynomial hierarchy.
For an introduction to the polynomial hierarchy, see Stockmeyer~\cite{Stockmeyer-1976a}.
As a reminder, $\rmSigma^\P_{i+1} = \NP^{\rmSigma^{\P}_i}$, $\rmPi^\P_{i+1} = \coNP^{\rmSigma^{\P}_i}$, and $\rmSigma^\P_0 = \rmPi^\P_0 = \P$.

\begin{lemma}
\label{lemma:hier}
For all $s \in \mathbb{N}$:
\begin{itemize}
\item The \pibv{s} problem is in $\rmSigma^\P_{2s-2}$.
\item The \uibv{s} and \tibv{s} problems are in $\rmPi^\P_{2s-1}$.
\end{itemize}
\end{lemma}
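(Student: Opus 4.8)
The plan is to induct on $s$, proving the three problems together so that \pibv{s} supplies the existential (producibility) layer of the alternation while \uibv{s} and \tibv{s} supply the universal (boundedness and terminality) layer. Throughout, the integer $n$ is the device that keeps every quantifier honest: once all relevant assemblies are forced to have size at most $2n$, every nondeterministic guess of an assembly (or of an entire assembly tree) has polynomial size. For the base case $s=1$, a stage-$1$ bin is simply a 2HAM system on an explicit tile set, so ``$A$ is producible'' is decidable in \P{} by Theorem~3.2 of~\cite{Doty-2014a}; stage $0$ is empty, so condition~2 is vacuous and \pibv{1}$\in\P=\rmSigma^\P_0$. The complement of \uibv{1}, ``some producible assembly has size $>n$'', is witnessed by a smallest oversized assembly, which is a $\tau$-stable combination of two producible assemblies each of size at most $n$ and hence has size in $(n,2n]$; an \NP{} machine guesses and verifies it, so \uibv{1}$\in\coNP=\rmPi^\P_1$. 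For \tibv{1} we additionally guess a producible $C$ of size at most $n$ combining with $A$ to witness non-terminality, again a \coNP{} test.

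For the inductive step, write $\mathrm{Prod}_s(A,b)$ and $\mathrm{Term}_s(A,b)$ for producibility and terminality in a stage-$s$ bin. To verify $\mathrm{Prod}_s$, I would guess an assembly tree for $A$: its leaves partition the tiles of $A$, so there are at most $|A|$ of them and the tree has polynomial size. A \P{}-check confirms each internal node is a $\tau$-stable combination of its children, and each leaf $L$ must be a terminal assembly of some feeding stage-$(s-1)$ bin $b'$, which is exactly a \tibv{s-1} query (well formed since $L$ is a subassembly of $A$). Using $\tibv{s-1}\in\rmPi^\P_{2s-3}$ for the leaves, together with \uibv{s-1} queries to certify condition~2, and the oracle identity $\NP^{\rmPi^\P_{2s-3}}=\NP^{\rmSigma^\P_{2s-3}}=\rmSigma^\P_{2s-2}$, this places \pibv{s} in $\rmSigma^\P_{2s-2}$. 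For \uibv{s} and \tibv{s} I would complement and guess a size-$(n,2n]$ witness: for \uibv{s} a smallest oversized producible (of $b$ or of a stage-$(s-1)$ bin), and for \tibv{s} an assembly $C$ witnessing non-terminality. Each guess is checked for producibility by a \pibv{s} oracle, now available at level $\rmSigma^\P_{2s-2}$, so the complements lie in $\NP^{\rmSigma^\P_{2s-2}}=\rmSigma^\P_{2s-1}$, giving \uibv{s},\tibv{s}$\in\rmPi^\P_{2s-1}$. Unwinding the recurrence $\sigma_s=\pi_{s-1}+1$ and $\pi_s=\sigma_s+1=\pi_{s-1}+2$ from $\pi_1=1$ reproduces the claimed levels $2s-2$ and $2s-1$.

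The main obstacle is the size bookkeeping that justifies restricting every guess to assemblies of size at most $2n$. The ``smallest oversized assembly has size at most $2n$'' claim needs every \emph{input} to the bin in question to already be bounded by $n$; this is precisely what the bundled condition~2 (and condition~3 of \tibv{}) supplies, and it is why the three problems must be established simultaneously with the size bound threaded through each oracle call rather than proved in isolation. The delicate point is to make the cascade of size hypotheses line up: a \tibv{s-1} query invoked inside \pibv{s} must certify exactly the boundedness that \pibv{s} relies on, so that an oversized assembly appearing far below cannot escape detection by being too large to guess. Verifying that the bundled conditions are chosen so this cascade closes consistently across all three problems, rather than the quantifier counting itself, is where I expect the real work to lie.
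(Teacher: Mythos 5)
Your proposal is correct and follows essentially the same route as the paper: a joint induction on $s$ with the same base case (Doty's polynomial-time producibility check plus \coNP{} guessing of bounded witnesses), the same inductive step (\pibv{s} as an \NP{} machine guessing a decomposition of $A$ whose pieces are verified by \tibv{s-1} oracle calls, with \uibv{s-1} calls certifying the size condition; \uibv{s} and \tibv{s} as \coNP{} machines guessing witnesses checked by \pibv{s} oracle calls), and the same oracle identities $\NP^{\rmPi^\P_{2s-3}}=\rmSigma^\P_{2s-2}$ and $\coNP^{\rmSigma^\P_{2s-2}}=\rmPi^\P_{2s-1}$. Your closing observation about the bundled size conditions being exactly what keeps every guess polynomially bounded matches the paper's own remark on the recurring subproblem.
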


% remove in full version
Due to space limitations, the proof of this lemma is omitted.

\begin{proof}
The proof is by induction on $s$.
We begin by proving that \pibv{1} $\in \rmSigma^\P_{2s-2} = \P$ and \uibv{1}, \tibv{1} $\in \rmPi^\P_{2s-1} = \coNP$ (the base case).
Then we provide recursive algorithms of the correct complexity for \pibv{s}, \uibv{s}, and \tibv{s}, assuming that such algorithms exist for \pibv{s-1}, \uibv{s-1}, and \tibv{s-1} (the inductive step).

\textbf{Algorithms for the \pibv{1}, \uibv{1}, and \tibv{1} problems.}
All three problems contain, as a subproblem, ``does every producible assembly of every bin in stage $s-1$ of $\Gamma$ have size at most $n$?''.
The answer to this is trivially yes - so only the complexity of the other subproblems needs consideration.

Theorem 3.2 of Doty~\cite{Doty-2014a} states that there exists a polynomial-time algorithm for \pibv{1}.
The \uibv{1} problem can be solved by a \coNP{} machine via non-deterministically selecting an assembly of size in $(n, 2n]$ consisting of tile types input into bin $b$ and returning ``no'' if the assembly is producible (the machine returns ``no'' if any non-deterministic branch returns ``no'').
The \tibv{1} problem can be solved by a \coNP{} machine by (1) returning ``no'' if $A$ is not producible, (2) returning ``no'' if a second assembly (non-deterministically selected) is producible and attaches to $A$, (3) returning ``yes'' otherwise. 

\textbf{An algorithm for the \pibv{s} problem.}
We now assume from now on that there exist algorithms $\mathcal{P}_{s-1}$, $\mathcal{U}_{s-1}$, and $\mathcal{T}_{s-1}$ for the \pibv{s-1}, \uibv{s-1}, and \tibv{s-1} problems in $\rmSigma^\P_{2s-4}$, $\rmPi^\P_{2s-3}$, and $\rmPi^\P_{2s-3}$, respectively, by the inductive hypothesis. 
\begin{algorithm}[H]
\caption{A $\rmSigma^\P_{2s-2}$ algorithm for the \pibv{s} problem}
\begin{algorithmic}[1]
\Procedure{$\mathcal{P}_s$}{$\Gamma, b, A, n$} \Comment{Bin $b$ is in stage $s$ of $\Gamma$}
\If {\textbf{not} $A$ is $\tau$-stable} \Comment{In \P{} via min-cut}
\State \textbf{return} no.
\EndIf
\State $I \gets \{A\}$ 
\While {non-deterministically choosing to continue \textbf{and} $|I| < |A|$}
\State Decompose an assembly $B$ in $I$ into two stable subassemblies $B_1$, $B_2$.
\State $I = (I - B) \cup \{B_1, B_2\}$ \Comment{Replace $B$ with $B_1$ and $B_2$}
\EndWhile 
\State Non-deterministically assign a bin $b_{B_i}$ in stage $s-1$ to each $B_i \in I$. 
\ForAll{$B_i \in I$}
	\If{\textbf{not} $\mathcal{T}_{s-1}(\Gamma, b_{B_i}, B_i, n)$ } \Comment{Function call is in $\rmPi^\P_{2s-3}$} 
	\State \textbf{return} no.
	\EndIf
\EndFor
\ForAll{bins $b'$ in stage $s-1$} \Comment{Subproblem~2}
	\If{\textbf{not} $\mathcal{U}_{s-1}(\Gamma, b', n)$ } \Comment{Function call is in $\rmPi^\P_{2s-3}$} 
	\State \textbf{return} no.
	\EndIf
\EndFor
\State \textbf{return} yes.
\EndProcedure
\end{algorithmic}
\end{algorithm}
The algorithm runs as an \NP{} machine (making calls to other machines).
Lines 5-10 non-deterministically compute an assembly process for $A$ in bin $b$, and lines 8-12 check that such a process begins with terminal assemblies of (specific) input bins.
Lines 13-18 simply check that the condition of subproblem~2 is satisfied. 

The complexity of the algorithm is \NP{} with polynomially many calls to algorithms in $\rmPi^\P_{2s-3}$.
That is, $\NP^{\rmPi^\P_{2s-3}} = \NP^{\rmSigma^\P_{2s-3}}  = \rmSigma^\P_{2s-2}$.

\textbf{An algorithm for the \uibv{s} problem.}
Since we have already proved that there exists a $\rmSigma^\P_{2s-2}$ algorithm $\mathcal{P}_s$, we assume this as well. 
\begin{algorithm}[H]
\caption{A $\rmPi^\P_{2s-1}$ algorithm for the \uibv{s} problem}
\begin{algorithmic}[1]
\Procedure{$\mathcal{U}_s$}{$\Gamma, b, n$} \Comment{Bin $b$ is in stage $s$ of $\Gamma$}
\State Non-deterministically select an assembly $A$ with $n < |A| \leq 2n$.
	\If { $\mathcal{P}_s(\Gamma, b, A, n)$ } \Comment{Function call is in $\rmSigma^\P_{2s-2}$}
	\State \textbf{return} no.
	\EndIf
\ForAll{bins $b'$ in stage $s-1$}
	\If { $\mathcal{P}_s(\Gamma, b', A, n)$ } \Comment{Function call is in $\rmSigma^\P_{2s-4}$}
	\State \textbf{return} no.
	\EndIf
\EndFor
\State \textbf{return} yes.
\EndProcedure
\end{algorithmic}
\end{algorithm}

The algorithm runs as a \coNP{} machine, returning ``no'' unless every non-deterministic branch returns ``yes''.
Lines 2-5 solve subproblem~1, while lines 6-10 address subproblem~2. 

The complexity of the algorithm is then \coNP{} with two calls to algorithms in $\rmSigma^\P_{2s-2}$.
That is, $\coNP^{\rmSigma^\P_{2s-2}} = \rmPi^\P_{2s-1}$.

\textbf{An algorithm for the \tibv{s} problem.}
Since we have already proved that there exists a $\rmPi^\P_{2s-1}$ algorithm $\mathcal{U}_s$, we assume this as well. 

\begin{algorithm}[H]
\caption{An $\rmPi^\P_{2s-1}$ algorithm for the \tibv{s} problem}
\begin{algorithmic}[1]
\Procedure{$\mathcal{T}_s$}{$\Gamma, b, A, n$} \Comment{Bin $b$ is in stage $s$ of $\Gamma$}
\If { \textbf{not} $\mathcal{P}_s(\Gamma, b, A, n)$ } \Comment{Function call in $\rmSigma^\P_{2s-2}$}
\State \textbf{return} no.
\EndIf
\State Non-deterministically select an assembly $B$ with $|B| \leq n$.
\If { $\mathcal{P}_s(\Gamma, b, B, n)$ \textbf{and} $A$ and $B$ can attach at temperature $\tau$}
\State \textbf{return} no.
\EndIf
\If { \textbf{not} $\mathcal{U}_s(\Gamma, b, n)$ } \Comment{Subproblems 2 and 3}
\State \textbf{return} no.
\EndIf
\State \textbf{return} yes.
\EndProcedure
\end{algorithmic}
\end{algorithm}

The algorithm runs as a \coNP{} machine, returning ``no'' unless every non-deterministic branch returns ``yes''.
Lines 2-8 verify that $A$ is a terminal assembly of bin $b$ (subproblem~1): $A$ is not a terminal assembly if and only if (1) $A$ is not producible (lines 2-4), or (2) another producible assembly $B$ can attach to $A$ (lines 5-8). 

The complexity of the algorithm needs a slightly careful analysis.
Lines 2-8 can be seen as a \coNP{} algorithm with two calls to algorithms in $\rmSigma^\P_{2s-2}$, i.e. a $\coNP^{\rmSigma^\P_{2s-2}} = \rmPi^\P_{2s-1}$ algorithm.
Then the entire algorithm is a \P{} algorithm with a call to a $\rmPi^\P_{2s-1}$ algorithm (lines 2-8) and another call to a $\rmPi^\P_{2s-1}$ algorithm (line 9).
That is, a $\P^{\rmPi^\P_{2s-1}} = \rmPi^\P_{2s-1}$ algorithm.

\textbf{A remark on the reoccurring subproblem.}
All three problems have the subproblem ``does every producible assembly of every bin in stage $s-1$ of $\Gamma$ have size at most $n$?''
Removing this subproblem from the \tibv{s} problem makes the problem undecidable, since arbitrarily large assemblies (carrying out unbounded computation) may attach to $A$.
Seen from another perspective, line~5 of $\mathcal{T}_s$ is only correct because we may assume that any attaching assembly $B$ has size at most $n$.
The \pibv{s} and \uibv{s} problems are also similarly undecidable when the subproblem is removed.

In a system with a unique terminal assembly/shape, no producible assembly of any bin has size exceeding that the unique terminal assembly/shape. 
Thus adding such a subproblem does not change the answer to staged UAV/USV problem instances (a ``no'' with the added subproblem implies a ``no'' without it as well).
\end{proof}

With this algorithmic machinery in place, we move to the first main result:

\begin{definition}[Stage-$s$ unique assembly verification (Stage-$s$ UAV) problem]
Given a staged system $\Gamma$ with $s$ stages and an assembly $A$, is $A$ the unique terminal assembly of $\Gamma$?
\end{definition}

\begin{theorem}
\label{thm:stage-s-UAV}
The stage-$s$ UAV problem is in $\rmPi^\P_{2s}$.
\end{theorem}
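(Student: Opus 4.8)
The plan is to express the stage-$s$ UAV predicate as a \coNP{} computation that makes polynomially many calls to the \tibv{s} and \uibv{s} algorithms $\mathcal{T}_s, \mathcal{U}_s$ furnished by Lemma~\ref{lemma:hier}, each of which lies in $\rmPi^\P_{2s-1}$. Fixing the input $(\Gamma, A)$ and writing $n = |A|$, I would first record the characterization that $A$ is the unique terminal assembly of $\Gamma$ if and only if: (A) $A$ is a terminal assembly of some final-stage (stage $s$) bin; (B) every final-stage bin has all producible assemblies of size at most $n$; and (C) no final-stage bin has a terminal assembly $B$ with $B \not\simeq A$ and $|B| \le n$. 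The reverse implication is direct: (B) forces every terminal assembly to have size at most $n$, so (C) leaves only translates of $A$ as possible terminal assemblies while (A) guarantees at least one. The forward implication uses the observation recorded after Lemma~\ref{lemma:hier}, that a system with a unique terminal assembly has no producible assembly exceeding that assembly's size, which is exactly what yields (B).

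Next I would realize each clause with oracle calls. Clause (A) becomes a disjunction over the polynomially many final-stage bins $b'$ of $\mathcal{T}_s(\Gamma, b', A, n)$; clause (B) becomes a conjunction over final-stage bins $b$ of $\mathcal{U}_s(\Gamma, b, n)$; and clause (C) is checked co-nondeterministically by guessing a final-stage bin $b$ and an assembly $B$ with $|B| \le n$, then rejecting when $B \not\simeq A$ and $\mathcal{T}_s(\Gamma, b, B, n)$ returns ``yes''. The whole procedure runs as a \coNP{} machine that rejects whenever any branch rejects.

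For the complexity bound, the machine performs deterministic polynomial work together with co-nondeterministic guessing, and issues polynomially many queries to the $\rmPi^\P_{2s-1}$ algorithms $\mathcal{T}_s$ and $\mathcal{U}_s$. Since a $\rmSigma^\P_{2s-1}$ oracle and its complementary $\rmPi^\P_{2s-1}$ oracle confer identical power, this places the problem in $\coNP^{\rmPi^\P_{2s-1}} = \coNP^{\rmSigma^\P_{2s-1}} = \rmPi^\P_{2s}$, as desired.

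The step I expect to be the main obstacle is the correctness argument on ``no'' instances, because $\mathcal{T}_s$ and $\mathcal{U}_s$ each bundle the size-bound subproblems together with their primary predicate. The delicate case is an offending terminal assembly $B \not\simeq A$ whose size exceeds $n$: such a $B$ is never produced by the size-$\le n$ guess in clause (C), so I must argue that its very existence violates the size bound and is therefore caught by the $\mathcal{U}_s$ conjunction in clause (B). Symmetrically, I would check that when the size bounds genuinely hold, the bundled checks inside $\mathcal{T}_s$ collapse to the bare predicate ``$B$ is terminal in $b$'', so that no ``yes'' instance is wrongly rejected. Once this interplay between the size-bound subproblems and the uniqueness check is pinned down, the complexity bookkeeping is routine.
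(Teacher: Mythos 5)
Your proposal is correct and takes essentially the same route as the paper: the paper's algorithm likewise runs as a \coNP{} machine that guesses a small assembly $B \neq A$, rules it out with calls to $\mathcal{T}_s$, invokes $\mathcal{U}_s$ for the size bounds, and closes with the identical bookkeeping $\coNP^{\rmPi^\P_{2s-1}} = \coNP^{\rmSigma^\P_{2s-1}} = \rmPi^\P_{2s}$. The only departure is your explicit clause (A) that $A$ itself is terminal in some final-stage bin; the paper omits this because, once all producibles are size-bounded, terminal assemblies must exist and clause (C) forces them all to equal $A$, so (A) is implied.
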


\begin{proof}
We give an algorithm for the stage-$s$ UAV problem.
The stage-$s$ UAV problem may be restated as:
\begin{enumerate}
\item is every assembly $B$ with $|B| \leq |A|$ and $B \neq A$ not a terminal assembly of any bin in stage $s$?
\item and does every producible assembly of every bin in stage $s-1$ of $\Gamma$ have size at most $|A|$? 
\end{enumerate}

% omit in full version
In the algorithm below, $\mathcal{T}_s$ and $\mathcal{U}_s$ are algorithms for the \tibv{s} and \uibv{s} problems, respectively. 

\begin{algorithm}[H]
\label{alg:stage-s-UAV}
\caption{A $\rmPi^\P_{2s}$ algorithm for the stage-$s$ UAV problem}
\begin{algorithmic}[1]
\Procedure{$\mathcal{UAV}_s$}{$\Gamma, A$} \Comment{$\Gamma$ has $s$ stages.}
\State Non-deterministically select an assembly $B$ with $|B| \leq n$ and $A \neq B$.
\ForAll {bins $b$ in stage $s$ of $\Gamma$}
	\If {$\mathcal{T}_s(\Gamma, b, B)$} \Comment{Function call is in $\rmPi^\P_{2s-1}$}
	\State \textbf{return} no.
	\EndIf
\EndFor
\If { \textbf{not} $\mathcal{U}_s(\Gamma, b, |A|)$ } \Comment{Function call is in $\rmPi^\P_{2s-1}$}
\State \textbf{return} no.
\EndIf
\State \textbf{return} yes.
\EndProcedure
\end{algorithmic}
\end{algorithm}

The algorithm runs as a \coNP{} machine, returning ``no'' unless every non-deterministic branch returns ``yes''.
Lines 2-8 verify that $A$ is a terminal assembly of bin $b$ (subproblem~1): $A$ is not a terminal assembly if and only if (1) $A$ is not producible (lines 2-4), or (2) another producible assembly $B$ can attach to $A$ (lines 5-8). 
\end{proof}

Every staged system has some number of stages $s \in \mathbb{N}$, but there is no limit to the number of stages a staged system may have. 
Thus the staged UAV problem is not contained in any level of \PH{}, but every instance can be solved by an algorithm that runs at a fixed level ($\rmPi^\P_{2s}$) of the hierarchy.
Since it is a well-known that $\PH{} \subseteq \PSPACE{}$, this gives the desired result: 

\begin{corollary}
The staged UAV problem is in \PSPACE{}. 
\end{corollary}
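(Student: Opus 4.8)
The plan is to leverage Theorem~\ref{thm:stage-s-UAV} directly, observing that although the level $2s$ of the polynomial hierarchy grows with the input, the underlying computation can be carried out in polynomial space uniformly. Given an instance $(\Gamma, A)$ of the staged UAV problem, the number of stages $s$ is part of the description of $\Gamma$, so $s$ is bounded by the input size. First I would invoke the procedure $\mathcal{UAV}_s$ from Theorem~\ref{thm:stage-s-UAV}, which decides the instance by a $\rmPi^\P_{2s}$ computation built from the mutually recursive procedures $\mathcal{P}_s$, $\mathcal{U}_s$, and $\mathcal{T}_s$ of Lemma~\ref{lemma:hier}.

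The central observation is that this family of procedures collectively forms a single alternating polynomial-time computation. Each procedure performs polynomially many steps of deterministic work, guesses objects (the assemblies $A$, $B$, $C$, the bin assignments, the subassembly decompositions) each of polynomial size, since their sizes are bounded by $2n$ or $|A|$, and then recurses only into stage-$(s-1)$ procedures. Consequently the recursion depth is $O(s)$, so every root-to-leaf branch of the alternating computation has length polynomial in the input; the number of alternations between existential (\NP-style) and universal (\coNP-style) phases is $O(s)$, which is likewise polynomially bounded. Thus $\mathcal{UAV}_s$ is an alternating Turing machine running in polynomial time, and by the theorem of Chandra, Kozen, and Stockmeyer that alternating polynomial time equals \PSPACE{}, the staged UAV problem is in \PSPACE{}.

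Equivalently, and perhaps more transparently, I would give the explicit \PSPACE{} simulation: evaluate the recursion depth-first, replacing each existential guess by an exhaustive loop over all polynomial-size candidates (accepting if some branch accepts) and each universal guess by a loop that rejects if any branch rejects. The call stack then has depth $O(s)$, each frame stores only polynomial-size data (the current guessed assembly, loop counters over the polynomially many bins, and the working set $I$ in $\mathcal{P}_s$, whose total size is at most $|A|$), and crucially the space for each subcomputation is reused across sibling branches. Hence the total space is $O(s)$ times a polynomial, which is polynomial in the input size.

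The main obstacle is precisely the non-constant hierarchy level: one cannot simply cite $\PH \subseteq \PSPACE{}$, since that containment is stated for each \emph{fixed} level while $2s$ depends on the input. The substance of the argument is therefore the uniformity check above, namely verifying that every computation branch stays polynomially long and that the recursive simulation reuses space across branches, so that the $O(s)$ recursion depth contributes only a polynomial factor. Everything else (polynomiality of the guess sizes, of the bin counts, and of the per-level work) follows directly from the constructions in Lemma~\ref{lemma:hier} and Theorem~\ref{thm:stage-s-UAV}.
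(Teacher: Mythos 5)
Your proposal is correct, and it is in fact more careful than the paper's own argument. The paper disposes of this corollary in one line: every instance with $s$ stages is decided by the $\rmPi^\P_{2s}$ algorithm of Theorem~\ref{thm:stage-s-UAV}, and $\PH \subseteq \PSPACE$, done. As you point out, that citation alone is not strictly sufficient: $\PH \subseteq \PSPACE$ is a statement about each \emph{fixed} level, whereas here the level $2s$ grows with the input, so one additionally needs that the simulation can be carried out in space bounded by a single polynomial independent of $s$ --- a priori, the polynomial space bounds for the level-$2s$ algorithms could degrade (say, in degree) as $s$ grows. Your uniformity check supplies exactly this missing substance: the recursion tree of $\mathcal{UAV}_s$, $\mathcal{T}_s$, $\mathcal{U}_s$, $\mathcal{P}_s$ has depth $O(s)$; every guessed object and every stack frame has size polynomial in the input (bounded via $|A|$, $n \leq 2|A|$, and the number of bins); and a depth-first evaluation that replaces existential guesses by exhaustive loops and reuses space across sibling branches runs in $O(s)$ times a fixed polynomial amount of space, which is polynomial since $s$ is part of the instance description. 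Either of your two framings --- an alternating polynomial-time machine together with the Chandra--Kozen--Stockmeyer theorem, or the explicit recursive simulation --- closes the argument. What the paper's route buys is brevity (and its implicit claim is indeed true for this particular algorithm family, precisely because of the structure you verified); what your route buys is an actual proof of the uniformity that the one-line citation takes for granted.
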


Next, we move to shape verification:

\begin{definition}[Stage-$s$ unique shape verification (Stage-$s$ USV) problem]
Given a staged system $\Gamma$ with $s$ stages and a shape $S$, is $S$ the unique terminal shape of $\Gamma$?
\end{definition}

\begin{theorem}
\label{thm:stage-s-USV}
The stage-$s$ USV problem is in $\rmPi^\P_{2s}$.
\end{theorem}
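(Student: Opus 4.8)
The two problems are structurally identical once one observes that the only place the assembly $A$ enters the UAV algorithm is through (i) a size bound and (ii) a test of the form ``$B \neq A$'' used to rule out the known terminal assembly. For the USV problem we are given a shape $S$ rather than an assembly, so I would restate the condition as: $S$ is the unique terminal shape of $\Gamma$ if and only if (1) every assembly $B$ with $|B| \leq |S|$ whose shape is \emph{not} $S$ is not a terminal assembly of any bin in stage $s$, and (2) every producible assembly of every bin in stage $s-1$ has size at most $|S|$. Condition (1) replaces the ``$B \neq A$'' guard by the geometric test ``$\mathrm{shape}(B) \neq S$,'' and condition (2) is exactly the recurring size subproblem already handled by $\mathcal{U}_s$.

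\textbf{Concretely, I would present an algorithm $\mathcal{USV}_s(\Gamma, S)$ running as a \coNP{} machine} that first non-deterministically selects a $\tau$-stable assembly $B$ with $|B| \leq |S|$ and $\mathrm{shape}(B) \neq S$, then for every bin $b$ in stage $s$ calls $\mathcal{T}_s(\Gamma, b, B, |S|)$ and returns ``no'' if any call returns ``yes'' (i.e.\ if such a wrong-shaped $B$ is actually terminal somewhere in the final stage). As in the UAV case, a final call to $\mathcal{U}_s(\Gamma, b, |S|)$ enforces subproblem~2. Checking $\mathrm{shape}(B) \neq S$ is a polynomial-time geometric comparison (compare $\mathrm{dom}(B)$ against $S$ up to translation), so it adds nothing to the complexity. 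The machine returns ``yes'' only if every non-deterministic branch returns ``yes,'' which correctly captures the universally quantified condition (1).

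\textbf{The complexity analysis is then identical to Theorem~\ref{thm:stage-s-UAV}:} the algorithm is a \coNP{} computation making polynomially many calls to the $\rmPi^\P_{2s-1}$ algorithms $\mathcal{T}_s$ and $\mathcal{U}_s$, yielding $\coNP^{\rmPi^\P_{2s-1}} = \coNP^{\rmSigma^\P_{2s-1}} = \rmPi^\P_{2s}$, since $\rmPi^\P_{2s-1}$ and $\rmSigma^\P_{2s-1}$ give the same oracle class.

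\textbf{I expect the main obstacle to be a subtle gap in condition (1), not the complexity bookkeeping.} Requiring that every too-small, wrong-shaped assembly be non-terminal does not by itself guarantee that \emph{some} terminal assembly of shape $S$ exists, nor does it immediately rule out a terminal assembly of shape $S$ that is itself larger than $|S|$ tiles in some unexpected configuration---though the latter cannot happen, since an assembly of shape $S$ has exactly $|S|$ tiles. The real care is needed in arguing the ``only if'' direction: if $S$ is \emph{not} the unique terminal shape, I must exhibit a terminal assembly witnessing this, and show it has size at most $|S|$ (so that the non-deterministic guess can find it) once subproblem~2 is imposed. The same observation used earlier---that in a system with a unique terminal shape no producible assembly exceeds the size of that shape---closes this gap, so the size bound $|B| \leq |S|$ loses no witnesses. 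Beyond this, one should verify that the shape test interacts correctly with the subassembly/translation conventions of Section~\ref{sec:model}, but this is routine.
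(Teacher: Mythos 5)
Your proposal is correct and follows essentially the same route as the paper: restate the USV problem as the two subproblems (every undersized, wrong-shaped $B$ is non-terminal in every stage-$s$ bin, plus the recurring stage-$(s-1)$ size bound), then reuse the stage-$s$ UAV algorithm with ``$B \neq A$'' replaced by a shape-inequality test and $|A|$ replaced by $|S|$, giving $\coNP^{\rmPi^\P_{2s-1}} = \rmPi^\P_{2s}$. Your closing observation---that the size bound $|B| \leq |S|$ loses no witnesses because a system lacking a unique terminal shape is caught either by subproblem~2 or by an undersized witness---is exactly the justification the paper gives in its remark on the recurring subproblem.
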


\begin{proof}
The stage-$s$ USV problem can be restated as:
\begin{enumerate}
\item is every assembly $B$ with $|B| \leq |S|$ and shape not equal to $S$ not a terminal assembly of any bin in stage $s$?
\item and does every producible assembly of every bin in stage $s-1$ of $\Gamma$ have size at most $|S|$? 
\end{enumerate}

Notice that the subproblems only differ from those of the stage-$s$ UAV problem in that $S$ replaces $A$ and ``equal shape'' replaces ``equals''. 
Thus the algorithm differs from the $\rmPi^\P_{2s}$ algorithm for the stage-$s$ UAV problem on only line~5 (replace ``$A \neq B$'' with ``shape not equal to $S$'') and line~8 (replace $|A|$ with $|S|$).
\end{proof}

As for the UAV problem, since the stage-$s$ USV problem is in \PH{} for each $s \in \mathbb{N}$, the USV problem is in \PSPACE{}.

\begin{corollary}
The staged USV problem is in \PSPACE{}.
\end{corollary}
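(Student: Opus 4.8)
The plan is to lift the per-stage bound of Theorem~\ref{thm:stage-s-USV} to a single uniform space bound. First I would invoke that theorem: any instance whose system $\Gamma$ has exactly $s$ stages is decided by the $\rmPi^\P_{2s}$ algorithm (the USV variant of $\mathcal{UAV}_s$, with the line~5 and line~8 modifications described in the proof of Theorem~\ref{thm:stage-s-USV}). The essential observation is that the number of stages is not an external parameter but part of the input: a staged system is specified by its mix graph $M_{r,b}$, so $s = r$ is at most the length $N = |\langle \Gamma, S \rangle|$ of the input. Hence the alternation level $2s$ is bounded by $2N$, i.e.\ polynomial in $N$.

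Next I would argue that this polynomially bounded alternation level stays inside \PSPACE{}. The cleanest route is to unwind the mutual recursion among $\mathcal{P}_s$, $\mathcal{U}_s$, and $\mathcal{T}_s$ (Lemma~\ref{lemma:hier}) together with the top-level procedure into a single alternating Turing machine: each recursive level performs only polynomial-time work and contributes $O(1)$ quantifier alternations (a nondeterministic guess of an assembly of size at most $2n$ followed by a co-nondeterministic verification), and the recursion depth equals $s$. The resulting machine therefore runs in polynomial time with $O(s) = O(N)$ alternations. By the Chandra--Kozen--Stockmeyer characterization $\mathrm{APTIME} = \PSPACE{}$, the staged USV problem lies in \PSPACE{}. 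Equivalently, one may simulate the $\rmPi^\P_{2s}$ computation directly, using the standard fact that a $\rmPi^\P_i$ predicate is decidable in space polynomial in both the input and $i$; since $i = 2s$ is itself polynomial in $N$, the simulation uses space polynomial in $N$.

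The subtle point --- and the one I would be careful to state explicitly --- is that membership of every fixed-stage restriction in a fixed level of \PH{} does not, on its own, place the unrestricted problem in \PSPACE{}: the union over all $s$ of the classes $\rmPi^\P_{2s}$ is all of \PH{}, which is not known to collapse, and a family of problems each sitting in some level of \PH{} could in principle escape \PSPACE{} if the space needed to simulate level $i$ grew super-polynomially in $i$. What rescues us is precisely the two facts above working in tandem: the space cost of simulating the alternating computation grows only polynomially in the number of alternations, and the number of alternations $2s$ is itself polynomial in the input size because $s$ is encoded in $\Gamma$. I expect verifying this interplay, rather than any new combinatorics, to be the only real content of the argument; the remainder is a one-line appeal to Theorem~\ref{thm:stage-s-USV} together with the inclusion $\PH{} \subseteq \PSPACE{}$.
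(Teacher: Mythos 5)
Your proof is correct, and it follows the same skeleton as the paper's: invoke Theorem~\ref{thm:stage-s-USV} for the instance's own stage count $s$, observe that $s$ is part of the input (so $2s$ is polynomially bounded), and conclude \PSPACE{} containment. The difference is one of rigor rather than of route. The paper's entire proof is a single sentence --- since the stage-$s$ USV problem is in \PH{} for each $s$, the staged USV problem is in \PSPACE{} --- and, as you observe, that inference is not valid as literally stated: a problem whose fixed-parameter slices each lie in some level of \PH{} need not lie in \PSPACE{}, or even be decidable, absent uniformity (every finite slice of an arbitrary language is in \P{}). What you supply, and what the paper leaves implicit, is precisely the missing uniformity argument: the procedures $\mathcal{P}_s$, $\mathcal{U}_s$, $\mathcal{T}_s$ of Lemma~\ref{lemma:hier} arise from one uniform recursion, so the unwound computation is an alternating machine in which every root-to-leaf path does polynomial work with $O(s) = O(N)$ alternations, and alternating polynomial time equals \PSPACE{} (equivalently, the depth-first simulation of a uniformly-described $2s$-quantifier predicate runs in space polynomial in $N$ and $s$). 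So your write-up is the careful version of the paper's one-liner: nothing in it is wrong, the extra care is exactly where care is needed, and the only thing to note is that the paper's authors evidently regarded this uniformity as routine enough to omit.
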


\section{Open Problems}
\label{sec:open}

The most direct problem left open by this work is closing the gap in the bottom row of Table~\ref{tab:compare} between the $\coNP^{\NP}$-hardness and \PSPACE{} containment of the staged UAV and USV problems.
We believe that the approach of differentiating between satisfying and non-satisying assignments, then checking for the existence of various partial assignments (the $\forall$ portion of \aesat{}) can be generalized to achieve hardness for any number of quantifier alternations, using a number of stages proportional to the number of alternations:

\begin{conjecture}
The staged UAV and USV problems are \PSPACE{}-complete. 
\end{conjecture}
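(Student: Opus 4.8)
The plan is to prove \PSPACE-completeness by pairing the \PSPACE{} containment already established (the corollaries to Theorems~\ref{thm:stage-s-UAV} and~\ref{thm:stage-s-USV}) with a \PSPACE-hardness reduction from the canonical \PSPACE-complete problem, the truth of a fully quantified Boolean formula $Q_1 x_1\, Q_2 x_2 \cdots Q_n x_n\, \phi$ with each $Q_i \in \{\forall, \exists\}$ (TQBF). The reduction generalizes Theorem~\ref{thm:stagedUAVExtraHard}, which realizes the single alternation of \aesat{} in a constant number of stages. The target is to spend a constant number of stages per quantifier alternation, so that a formula with $\ell$ alternations produces an $O(\ell)$-stage system whose designated target assembly is the unique terminal assembly if and only if the formula is true. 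As a sanity check, this stage count is consistent with the $\rmPi^\P_{2s}$ upper bound of Theorem~\ref{thm:stage-s-UAV}: a quantifier prefix of $2\ell$ alternations (a $\rmPi^\P_{2\ell}$-complete instance) maps to an $O(\ell)$-stage system, so the construction plausibly also yields parameterized hardness complementing that upper bound.

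First I would extract, from the \aesat{} reduction, two reusable and \emph{composable} modules, each acting on a common assembly format. Every assembly carries a geometric ``teeth'' encoding of a partial assignment to the quantified prefix together with a tag recording whether the remaining QBF suffix evaluates to \true{} under that assignment. The innermost quantifier block is the base case: exactly as in Theorem~\ref{thm:stagedUAVExtraHard}, nondeterministic row attachment produces precisely those assignments of the innermost variables satisfying $\phi$, correctly tagged. Each outer quantifier is then collapsed by one module. An \emph{existential} collapse is the easy direction: nondeterministic production makes a \true-tagged assembly for the shorter prefix arise whenever at least one geometrically compatible \true-tagged assembly for the longer prefix exists, reusing the cooperative strength-1 bonding of the green bar. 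A \emph{universal} collapse inherits the test-assembly mechanism: one nondeterministically grows a test assembly for each assignment of the newly universally quantified variables (keeping the tile set polynomial), and a test assembly fails to be absorbed exactly when some inner assignment is not \true.

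The steps, in order, are: (i) fix the common teeth-plus-tag encoding and confirm that geometric interlocking forces attachment only between assemblies agreeing on the shared prefix, as in Figure~\ref{fig:uavLevel2}; (ii) implement the existential module in a constant number of stages; (iii) implement the universal module in a constant number of stages, reusing the duple-tagging and ``absorbing'' row-$0$ construction of Stages~5--6 so that unmatched test assemblies remain terminal; (iv) compose the modules in the order dictated by $Q_1, \dots, Q_n$ and prove, by induction on the number of remaining quantifiers, that after the $i$-th module exactly the prefixes $x_1, \dots, x_{n-i}$ whose suffix is true carry a \true{} tag; and (v) add a final merge stage in the style of Stage~7 whose target assembly is unique if and only if no stray terminal assembly survives, i.e.\ if and only if the formula is true. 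Staged USV hardness follows by the same observation used to deduce it from Theorem~\ref{thm:stagedUAVExtraHard}: any violation of unique assembly produced here is simultaneously a violation of unique shape.

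The main obstacle I expect is making the universal collapse genuinely composable. In Theorem~\ref{thm:stagedUAVExtraHard} the universal quantifier is outermost, so a failed test assembly may simply remain terminal and break uniqueness; nothing is fed forward. When universal quantifiers are nested beneath further quantifiers, the module must instead synthesize, from the success of \emph{all} inner assignments, a single cleanly \true-tagged assembly to serve as input to the next module---in effect realizing a conjunction over exponentially many inner assemblies while keeping every failure detectable at the final stage. Designing one geometric gadget that both (a) leaves a terminal witness whenever some inner branch fails and (b) emits an interference-free \true-tagged assembly when every inner branch succeeds, across alternating stages and without the two behaviors colliding, is the technical heart of the argument; establishing unique production of every intermediate bin throughout the nesting is the accompanying bookkeeping.
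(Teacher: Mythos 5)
The statement you are trying to prove is, in the paper, a \emph{conjecture}: the authors explicitly leave the gap between $\coNP^{\NP}$-hardness (Section~\ref{sec:level2}) and \PSPACE{} containment (Section~\ref{sec:pspace}) open, and even state the same plan you propose --- generalizing the alternation technique at a cost of $O(1)$ stages per quantifier. So there is no paper proof to compare against, and your proposal must stand on its own as a hardness argument. It does not, and the failure point is exactly the one you flag in your last paragraph but do not resolve: the universal-collapse module. In the paper's reduction, universal quantification is never realized as a \emph{positive} event; it is verified negatively, by terminality. A test assembly that finds no compatible \true{} partner simply remains terminal and destroys uniqueness at the final stage. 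This trick works precisely once, for the outermost quantifier, because it produces no output that later stages can consume. To nest a $\forall$ beneath further quantifiers, your module must emit a \true-tagged assembly exactly when \emph{all} (exponentially many) inner branches succeed. But producibility in two-handed and staged assembly is monotone: adding more producible assemblies to a bin never removes producibles from it, and attachment is inherently existential (an assembly attaches if \emph{some} compatible partner exists). No mechanism in the model lets the \emph{presence} of an assembly in a later stage be conditioned on the \emph{absence} (complete absorption) of assemblies in an earlier bin. Your item (iii) assumes such a gadget exists; constructing it --- or proving hardness by a route that avoids needing it --- is the entire content of the conjecture, not a detail of bookkeeping.

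Concretely, the two requirements you label (a) and (b) pull in opposite directions under monotonicity: if the ``clean \true-tagged output'' of a universal module is producible when every inner branch succeeds, then the partial assemblies that would build it are producible regardless of whether some branch fails, and nothing in the model erases them; conversely, if failure is recorded only as a stray terminal assembly, outer existential modules cannot ``see'' it and will happily proceed, so falsity of the formula need not propagate to a uniqueness violation of the final target. The paper's own machinery (duple tagging, the absorbing row-0 assembly, the Stage-7 merge) is all designed to manage exactly one layer of this tension. Your induction in step (iv) therefore has no valid inductive step beyond the first alternation. The containment half of your argument is fine --- it is the paper's Corollaries following Theorems~\ref{thm:stage-s-UAV} and~\ref{thm:stage-s-USV} --- but the hardness half remains, as in the paper, an open problem.
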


\begin{conjecture}
The stage-$s$ UAV and stage-$s$ USV problems are $\rmPi^p_{\Omega(s)}$-hard.
\end{conjecture}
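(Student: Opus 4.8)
The plan is to prove hardness by reducing from quantified Boolean satisfiability with a bounded number of quantifier alternations. Recall that deciding the truth of a sentence with prefix $\forall X_1 \exists X_2 \forall X_3 \cdots$ (alternating and beginning with $\forall$) over $k$ quantifier blocks is $\rmPi^\P_k$-complete~\cite{Stockmeyer-1976a}; \aesat{} is exactly the $k=2$ case used in Theorem~\ref{thm:stagedUAVExtraHard}. I would show that such a sentence with $k$ alternations can be encoded by a staged system using $\Theta(k)$ stages so that the sentence is true if and only if a designated target assembly (resp.\ shape) is the unique terminal assembly (resp.\ shape) of the system. Setting $k = \Theta(s)$ then yields the claimed $\rmPi^\P_{\Omega(s)}$-hardness of the stage-$s$ UAV and USV problems, and the $\rmPi^\P_{2s}$ upper bound of Theorem~\ref{thm:stage-s-USV} shows the number of stages is tight up to a constant factor.

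The construction generalizes the two gadgets already present in Theorem~\ref{thm:stagedUAVExtraHard}. The innermost (existential) block is the nondeterministic \SAT{}-assembly machinery of Theorem~\ref{thm:stagedUAVcoNP}: for the innermost existentially quantified variables, rows encoding every assignment self-assemble in parallel, and only assignments satisfying the matrix $\phi$ grow into a full, tagged assembly. Each universal block is a generalization of the \emph{test assembly} mechanism: for every assignment of the universally quantified variables at that level, a test assembly is grown whose geometric teeth allow it to attach only to a \emph{surviving} assembly encoding the same partial assignment. A universal assignment with no surviving extension leaves its test assembly terminal, producing a witness of non-uniqueness exactly as in the \aesat{} reduction. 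Alternating these two gadgets $k$ times, with a constant number of stages per alternation, simulates the quantifier prefix, each $\forall$ layer filtering out assignments whose inner subformula fails and each $\exists$ layer reintroducing nondeterministic choice.

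The hard part will be the bookkeeping required to iterate the ``tag, absorb, and merge'' machinery across many nested levels while preserving a \emph{single} unique target assembly. Even in the two-quantifier case this is delicate: Stages~5--7 must tag every non-surviving \SAT{} assembly, supply an auxiliary row-0 assembly to absorb unmatched duples, and then merge all survivors into one common target without creating a second terminal assembly or destroying the intended witness. For $k$ alternations, each level must hand the next a clean set of survivors together with the terminal witnesses it generates, using a geometric encoding of the partial assignment that is unambiguous yet does not induce unintended attachments between assemblies originating from different levels; ensuring the interlocking teeth at level $i+1$ are compatible only with the survivors of level $i$, while still driving every non-survivor into the unique target in the final merge, is the crux, and the main risk is that the geometric encoding of accumulated partial assignments grows in a way that permits spurious cross-level bonding. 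A clean inductive description of the per-level gadget, together with a proof that it maintains the invariant ``the target is unique if and only if the remaining quantified subformula is true,'' would complete the argument. Finally, allowing the number of alternations to grow with the input — i.e.\ reducing from the \PSPACE-complete problem of quantified Boolean formula validity — would use the same gadgets with $\Theta(n)$ stages and thereby simultaneously resolve the conjectured \PSPACE-completeness of the (unbounded-stage) staged UAV and USV problems.
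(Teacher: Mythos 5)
This statement is not a theorem of the paper at all: it appears as a conjecture in Section~\ref{sec:open}, and the paper offers no proof of it --- only a one-sentence remark that the authors \emph{believe} the approach of Theorem~\ref{thm:stagedUAVExtraHard} (differentiating satisfying from non-satisfying assignments, then testing for existence of partial assignments) ``can be generalized to achieve hardness for any number of quantifier alternations, using a number of stages proportional to the number of alternations.'' Your proposal is precisely that same suggestion, expanded: reduce from $\rmPi^\P_k$-complete quantified satisfiability, implement the innermost $\exists$ block by the nondeterministic \SAT{}-assembly machinery of Theorem~\ref{thm:stagedUAVcoNP}, and implement each $\forall$ block by a generalization of the test-assembly mechanism, with $O(1)$ stages per alternation. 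So you have reproduced the authors' intended strategy, but neither you nor the paper has a proof, and your writeup does not close the distance.

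The gap is that the entire technical content of the conjecture lives in exactly the part you defer. The two-quantifier construction is not obviously modular: its $\forall$ level turns falsity into \emph{non-attachment} (a test assembly left terminal), and to iterate you must convert that negative event into a positive, reusable signal for the next quantifier block. Staging does supply the right primitive (only terminal assemblies of a bin are forwarded), but a proof requires specifying (i) how survivors of level $i$ re-expose geometric teeth encoding \emph{only} the outer partial assignment, with the inner variables' geometry masked so that level-$(i+1)$ test assemblies cannot distinguish or spuriously bind them; (ii) how the composite assemblies created at level $i$ (test--\SAT{} attachments, absorbed duples, tagging gadgets) are all guaranteed to grow into the one final target while remaining inert toward every later level's gadgets; and (iii) an inductive invariant of the form ``after the stages implementing block $i$, the terminal assemblies of the designated bin are exactly the survivors encoding true inner subformulas plus the intended target-bound debris,'' proved against all unintended attachment orders that two-handed assembly permits. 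You explicitly label (i)--(iii) as ``the crux'' and ``the main risk'' rather than resolving them; in the two-level case these steps already consume Stages~5--7 of Theorem~\ref{thm:stagedUAVExtraHard} and are delicate there. Naming the correct reduction source and the correct gadget vocabulary is the starting point the authors themselves give; without the per-level construction and the invariant proof, the conjecture remains open.
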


The UAV and USV problems considered in this work are two variants of the generic challenge of \emph{verification}; considering the same problems limited to temperature-1 systems or with different inputs is also interesting:

\begin{problem}
What are the complexities of the staged UAV and USV problems restricted to temperature-1 systems? 
\end{problem}

\begin{problem}
What is the complexity (in any model) of the following UAV-like problem: given a system $\Gamma$ and an integer $n$, does $\Gamma$ have a unique terminal assembly of size at most $n$?
\end{problem}

Finally, the results and techniques presented here might find use in the study of other problems in staged and two-handed self-assembly, such as tile minimization.
The aTAM USV problem is \coNP-complete, while the \emph{minimum tile set problem} of finding the minimum number of tiles that uniquely assemble into a given shape is $\NP^{\NP}$-complete~\cite{Bryans-2013a}.
We now know that the 2HAM USV problem is $\coNP^{\NP}$-complete (Section~\ref{sec:2HAM}); does the corresponding optimization problem also rise in the hierarchy?
\begin{conjecture}
The 2HAM minimum tile set problem is $\NP^{\NP^{\NP}}$-complete. 
\end{conjecture}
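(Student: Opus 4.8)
The plan is to prove both containment in $\NP^{\NP^{\NP}} = \rmSigma^\P_3$ and $\rmSigma^\P_3$-hardness, exploiting that the 2HAM minimum tile set problem is an existential quantifier over tile sets wrapped around the 2HAM USV problem, which we have just shown to be $\coNP^{\NP} = \rmPi^\P_2$-complete. For \textbf{containment}, I would work with the decision version: given a shape $S$ and an integer $t$, is there a tile set $T$ with $|T| \leq t$ such that the system $(T, 2)$ uniquely assembles a shape equal to $S$? Since $t$ is polynomially bounded and each tile type carries four glues from a glue set of polynomial size, a candidate $T$ is an object of polynomial size. The algorithm non-deterministically guesses $T$ and then calls the 2HAM USV decision procedure of Section~\ref{sec:2HAM} on $((T, 2), S)$; that procedure runs in $\coNP^{\NP} = \rmPi^\P_2$, so the overall computation is an \NP{} guess followed by a single $\rmPi^\P_2$ oracle query. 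Because an oracle for a $\rmPi^\P_2$ language is exactly as powerful as one for the complementary $\rmSigma^\P_2$ language, this gives $\NP^{\rmSigma^\P_2} = \rmSigma^\P_3 = \NP^{\NP^{\NP}}$; equivalently, unfolding the quantifiers yields a predicate of the form $\exists T\, \forall B\, \exists C\, (\cdots)$, which is manifestly $\rmSigma^\P_3$.

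For \textbf{hardness}, I would reduce from $\exists\forall\exists$-\SAT{}, the $\rmSigma^\P_3$-complete analogue of \aesat{}: given $\psi(X, Y, Z)$, decide whether some assignment of $X$ makes $\forall Y\, \exists Z\; \psi(X, Y, Z)$ true. The reduction composes the outer existential quantifier, realized as a choice of tile set, with the inner $\forall\exists$ structure, which is precisely the \aesat{} instance handled by the 2HAM USV reduction of Theorem~\ref{thm:2HAMUSVHard}. Concretely, I would output a shape $S$ and a budget $t$ so that the only way to uniquely assemble $S$ with at most $t$ tile types is to (i) commit to one consistent assignment of the $X$ variables through the choice of which tile types are spent, and (ii) have the induced system uniquely assemble $S$, which by the USV construction occurs exactly when $\forall Y\, \exists Z\; \psi$ holds under that $X$-assignment.

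The technical heart is encoding the $\exists X$ choice into tile-count minimality, adapting the aTAM strategy of Bryans et al.~\cite{Bryans-2013a} to the two-handed setting. For each $x_i \in X$ I would place in $S$ a rigid ``selector'' feature admitting two mutually exclusive cheap realizations, one per truth value, with the global budget $t$ tight enough to force exactly one selection per variable yet loose enough to admit every genuine assignment. The remainder of $S$ is the target shape of the USV reduction for $\forall Y\, \exists Z\; \psi$, with its verification tile set hard-wired into the budget and parameterized so that the selector tiles feed the chosen $X$-values into the \SAT{}-evaluating machinery (the true/false tagging and test-assembly apparatus of Theorem~\ref{thm:2HAMUSVHard}). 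A within-budget tile set then corresponds to an $X$-assignment, and it uniquely assembles $S$ iff the induced \aesat{} instance is true.

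The \textbf{main obstacle} is proving that the threshold $t$ cleanly separates ``valid $X$-assignment'' tile sets from all others. Since the solver may choose \emph{any} tile set, I must establish a rigidity (no-cheating) property: no tile set of size at most $t$ can beat the budget by sharing glues across selector gadgets, repurposing verification tiles, or encoding a partial or inconsistent $X$-assignment, while every intended assignment is realizable at exactly the budget. This demands glue-disjoint, geometrically rigid selectors together with a careful counting argument that each within-budget tile set induces exactly one of the $2^{|X|}$ intended systems, and a verification that it is the \emph{shape} $S$, not merely some produced assembly, that is uniquely assembled in each case. Securing this correspondence is where essentially all the difficulty lies, and it is why the statement is posed as a conjecture.
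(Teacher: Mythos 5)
You should first be clear about what the paper itself offers here: this statement appears in the Open Problems section as a \emph{conjecture}, and the paper contains no proof of it. So there is no paper argument to compare against; your proposal has to stand on its own. On its own merits, the containment half is essentially correct and routine: one may assume $t \leq |S|$ (a tile set hard-coding one tile type per position of $S$ with matching strength-2 glues uniquely assembles $S$), so a candidate tile set, with at most $4t$ distinct glues, is a polynomial-size witness; guessing it and issuing one query to the Section~\ref{sec:2HAM} USV algorithm gives $\NP^{\rmPi^\P_2} = \NP^{\rmSigma^\P_2} = \rmSigma^\P_3 = \NP^{\NP^{\NP}}$, or equivalently an explicit $\exists\forall\exists$ predicate.

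The hardness half, however, contains a genuine gap, and it is the one you name yourself: the rigidity (no-cheating) lemma. The trouble is that in the minimum tile set problem the quantification ranges over \emph{all} tile sets of size at most $t$, not over subsets of gadgets you designed. When the $\exists\forall\exists$-\SAT{} instance is false, you must prove that \emph{no} tile set within budget uniquely assembles $S$ --- including tile sets whose glues, strengths, and assembly strategy have nothing to do with your selectors, rows, or test assemblies, and which might produce $S$ by an entirely different mechanism. This universal statement over adversarial tile sets is exactly the hard content of the aTAM result of~\cite{Bryans-2013a} that you propose to adapt, and the adaptation is strictly harder in the 2HAM, where producibility is far more permissive: there is no seed to control growth order, any two producible assemblies may combine, and partial or inconsistent ``selections'' can coexist in the soup and must still be argued either to resolve into $S$ or to violate unique assembly in precisely the intended cases. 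No 2HAM analogue of the Bryans et al.\ result is known even at the $\NP^{\NP}$ level, so your plan skips over the step that would constitute essentially the entire proof. Acknowledging the obstacle, as you do, is honest and correctly locates the difficulty, but it does not fill the gap; what you have is a reasonable research program (and a sound containment argument), not a proof --- which is consistent with the paper leaving the statement as a conjecture.
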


\bibliographystyle{abbrv} % amsplain
\bibliography{tam}

\end{document}